\documentclass{article}

\PassOptionsToPackage{numbers,compress}{natbib}
 \usepackage[preprint]{neurips_2026}

\usepackage[utf8]{inputenc} 
\usepackage[T1]{fontenc}    
\usepackage{hyperref}       
\usepackage{url}            
\usepackage{booktabs}       
\usepackage{amsfonts}       
\usepackage{nicefrac}       
\usepackage{microtype}      
\usepackage{xcolor}         
\usepackage{colortbl}
\usepackage{amsmath}     
\usepackage{amssymb}     
\usepackage[table]{xcolor}
\usepackage{tabularx}

\definecolor{echoshade}{HTML}{FFF2CC}
\definecolor{grposhade}{HTML}{E8F1FF}   
\definecolor{oracleshade}{HTML}{EAF7EA} 
\usepackage{bm}
\newcommand{\ms}[2]{#1{\scriptsize $_{\pm #2}$}}
\newcommand{\EE}{\mathbb{E}}
\usepackage{bm}          
\usepackage{bbm}         
\usepackage{cleveref}
 \usepackage{wrapfig}
\usepackage{graphicx}    
\usepackage{subcaption}
\usepackage{multirow}    
\usepackage{array}       
\definecolor{darkgold}{RGB}{184, 134, 11} 
\newcommand{\an}[1]{\textcolor{black}{#1}}
\newcommand{\nk}[1]{\textcolor{black}{#1}}
\newcommand{\todo}[1]{\textcolor{red}{TODO: #1}}
\DeclareMathOperator*{\argmax}{arg\,max}

\usepackage{amsthm}
\usepackage{arydshln}
\theoremstyle{plain}
\newtheorem{theorem}{Theorem}[section]
\newtheorem{proposition}[theorem]{Proposition}
\newtheorem{lemma}[theorem]{Lemma}
\newtheorem{corollary}[theorem]{Corollary}

\theoremstyle{definition}
\newtheorem{definition}[theorem]{Definition}
\newtheorem{assumption}[theorem]{Assumption}

\theoremstyle{remark}
\newtheorem{remark}[theorem]{Remark}
\usepackage[most]{tcolorbox}
\usepackage{pdflscape}
\newtcolorbox{theorybox}{
    enhanced,
    breakable=false,
      colback=white,
    colframe=blue!50!black,
    boxrule=0pt,
    leftrule=3pt,
    arc=1.5mm,
    left=1.2mm,
    right=1.2mm,
    top=1mm,
    bottom=1mm,
    before skip=0.8em,
    after skip=0.8em
}
\usepackage[edges]{forest} \usepackage{xcolor} \definecolor{echoshade}{HTML}{FFF2CC} \definecolor{lightblue}{HTML}{EAF3FF} \definecolor{lightgreen}{HTML}{EAF7EA} \definecolor{lightgraybox}{HTML}{F3F4F6}

\usepackage{tcolorbox}
\usepackage{listings}
 
\tcbuselibrary{skins, breakable}
 
\definecolor{promptgray}{RGB}{245,245,245}
\definecolor{promptborder}{RGB}{180,180,180}
\definecolor{headergray}{RGB}{210,210,210}
\definecolor{keywordblue}{RGB}{0,70,150}
 
\tcbuselibrary{skins, breakable}
\newtcolorbox{promptbox}[2][]{%
  enhanced, breakable,
  colback=promptgray, colframe=promptborder,
  coltitle=black, fonttitle=\small\bfseries\sffamily,
  title={#2},
  attach boxed title to top left={yshift=-2mm, xshift=4mm},
  boxed title style={colback=headergray, colframe=promptborder, sharp corners},
  sharp corners,
  top=4mm, left=4mm, right=4mm, bottom=3mm,
  #1
}

\title{\textsc{ECHO}: Learning Epistemically Adaptive Language Agents with Turn-Level Credit
}

\author{%
  Abhijnan Nath \thanks{Work done while at Colorado State University.} \quad
  Nikhil Krishnaswamy \\[-0.10em]
  \small Situated Grounding and Natural Language (SIGNAL) Lab  \\[-0.10em]
  \small Department of Computer Science, Colorado State University \\[-0.10em]
  \small Fort Collins, CO 80523 USA \\[-0.10em]
  \small \texttt{\{abhijnan.nath,nkrishna\}@colostate.edu}
}

\begin{document}

\maketitle

\begin{abstract}
What does it mean for a language agent to be adaptive? Effective multi-turn agents must decide what information to seek, how to use new evidence, and when they are certain enough to act.
We introduce \textbf{Epistemic Decision Processes} (EDPs), a belief-state formulation of multi-turn information seeking in which actions produce external observations that update the agent's posterior over a latent task variable. EDPs make epistemic adaptivity explicit: good policies choose actions that are useful under the current belief, not merely those that correlate with eventual success. We prove that belief-agnostic policies can suffer errors that compound exponentially over the horizon, and that aggregate trajectory returns can fail to identify the per-turn Bayesian advantage needed for epistemic credit. We then introduce \textsc{ECHO} (\textbf{E}pistemic \textbf{C}redit for \textbf{H}istory-Conditioned \textbf{O}ptimization), a practical clipped policy-gradient objective that assigns turn-level credit using posterior-sensitive rewards. In the \textbf{Clue Selector Game}, a novel controlled evidence-seeking benchmark, \nk{we show that} \textsc{ECHO} substantially improves resolution, information gain, and efficiency over trajectory-level GRPO, and matches or exceeds frontier baselines on epistemic metrics such as grounding, recovery, and calibration while producing almost no visible reasoning text. Code and data: \url{https://github.com/csu-signal/echo-edp} 
\end{abstract}

\section{Introduction}
\vspace*{-2mm}


\nk{In multi-step problem solving, including medical diagnoses, technical troubleshooting, scientific inquiry, and search, 
effective agents must decide what evidence to gather next; they do not merely act upon incomplete information~\citep{howard1966information,pauker1980threshold,pirolli1999information,settles2009active}.} \an{This skill is central to human learning and question asking~\citep{chouinard2007children,geva2021didaristotleuselaptop,meder2019stepwise,rothe2018people}, and increasingly so to LLM agents that search the web~\citep{deng2023mind2web,yao2022webshop,Zhou2023WebArenaAR}, call tools~\citep{schick2023toolformer}, inspect documents~\citep{zhang-etal-2025-belle}, and revise actions as observations arrive~\citep{ferrag2025llm,yao2022react}.} In such settings, the value of an \nk{information-seeking} question depends on the current belief state: what is already known, what uncertainty remains, and whether the answer would change future action~\citep{kaelbling1998planning}. Thus, adaptivity is fundamentally \textit{epistemic}~\citep{meder2019stepwise,rothe2018people}---good agents choose actions that reduce \nk{not just global uncertainty, but the uncertainty} that \nk{matters most in the current context}.

\begin{figure*}
    \centering
    \begin{subfigure}[t]{0.58\textwidth}
        \centering
        \includegraphics[width=\linewidth]{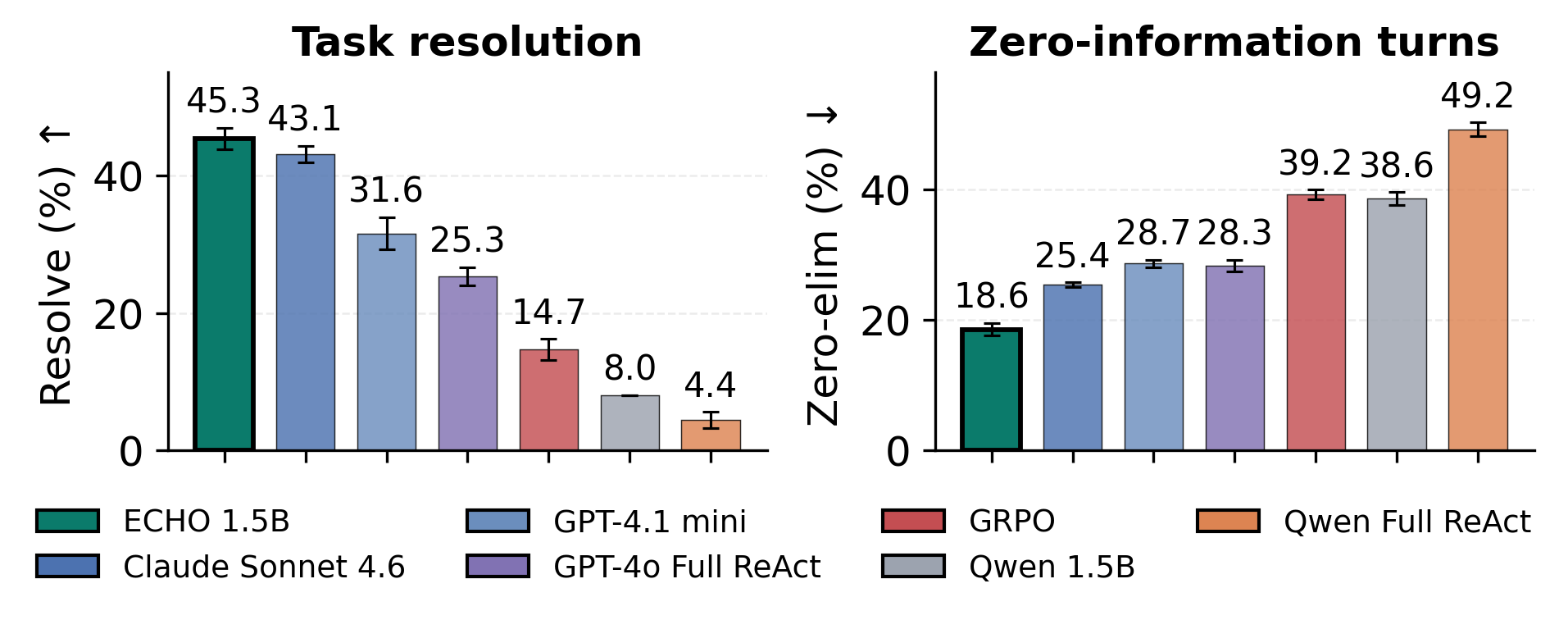}
    \end{subfigure}
    \hfill
    \begin{subfigure}[t]{0.35\textwidth}
        \centering
        \includegraphics[width=\linewidth]{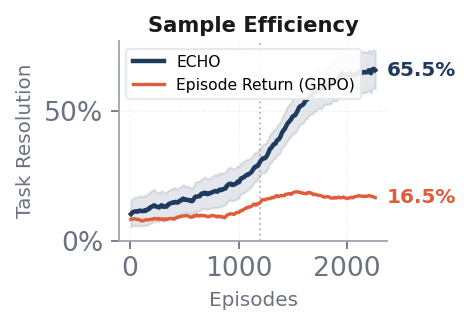}
    \end{subfigure}

    \vspace*{-3mm}
  \caption{
Results on multi-turn information seeking in the \textbf{Clue Selector Game (CSG)}. \textbf{A 1.5B \textsc{ECHO}-trained Qwen 2.5 policy reaches frontier-level task resolution, by asking questions that more reliably update the candidate belief state.} Compared with GRPO and prompting baselines, \textsc{ECHO} produces fewer zero-elimination turns and learns substantially \textit{faster}, illustrating the central idea of epistemic credit: useful actions should be rewarded when they reduce uncertainty under the current belief state, not only when the final trajectory succeeds.
}
    \label{fig:intro_teaser}
    \vspace*{-3mm}
\end{figure*}

\nk{In multi-turn tasks, final success may only be known at the end of the trajectory. Intermediate rewards are often unavailable, learned approximately through process reward models~\citep{choudhury2025processrewardmodelsllm,hu2026playing20questiongame,lee2025prints}, or disconnected from whether a query actually changed the agent's belief (e.g., when rewarding only task outcomes). The agent's uncertainty is often organized around a hidden variable of interest, such as a goal or sub-goal, that must be inferred through interaction rather than revealed directly~\citep{abdulhai2023lmrl,mazzaccara-etal-2024-learning,zhang-etal-2024-probing,zhang2024the}. While the agent may observe the full history of its own queries and the resulting feedback, the value of the next action still depends on how that history changes its belief. This creates a \textit{credit-assignment} problem that existing methods do not fully address~\citep{bertolazzi2023chatgpt,laban2026llms, patil2025berkeley}.}

\an{We argue that information-seeking agents need more than temporal credit assignment \nk{for optimal behavior}: they need \textit{epistemic} credit assignment, which \nk{reflects the usefulness of evidence elicited by an action given the agent's current belief, not merely that it appears in a trajectory that eventually succeeds.}
We formalize this through \emph{Epistemic Decision Processes} (EDPs), a belief-state view\footnote{
EDPs are a task-level specialization of belief-state decision processes: unlike BAMDPs~\citep{duff2002optimal,martin1967bayesian}, which typically model uncertainty over transition/reward parameters or latent MDP identities, EDPs focus on uncertainty over a hidden task variable under a fixed interaction interface; see Sec.~\ref{sec:edp_theory} and Appendix~\ref{app:proofs} for details.
} of multi-turn information seeking in which actions produce external observations that update a posterior over a latent task variable. Our theory shows why this distinction matters: there exist long-horizon information-seeking tasks where belief-sensitive policies are necessary for strong performance, while trajectory-level returns can be too coarse to identify which turns actually reduced uncertainty.
}


\nk{We then} introduce \textsc{ECHO}, a clipped policy-gradient objective that assigns turn-level credit using posterior-sensitive rewards. \nk{\textsc{ECHO} trains} agents to ask questions that are useful under the current belief state, not merely questions that correlate with eventual success. We instantiate this setting in the \emph{Clue Selector Game} (CSG), \nk{a deduction task} where 
the hypothesis space, candidate posterior, and belief updates are exactly measurable. Inspired by prior work in step-level epistemic competence~\citep{shao2025llm}, CSG evaluates both final resolution and process-level epistemic behavior, using diagnostics 
such as groundedness, zero-information turns, recovery after low-information actions, and late-stage calibration. \textbf{Empirically, \textsc{ECHO} trains a 1.5B policy to reach frontier-level task resolution, reduce zero-elimination turns, and learn substantially faster than standard RL methods, while producing almost no visible reasoning text.} This reveals a form of ``silent exploration'': adaptive information seeking expressed through belief-conditioned actions rather than long chains of thought.

Our novel contributions are as follows:
\begin{enumerate}
    \item We introduce \emph{Epistemic Decision Processes} (EDPs), a belief-state formulation of multi-turn information-seeking language agents, and define epistemic adaptivity as posterior-sensitive action selection.
    \item We prove that belief-sensitive policies are necessary for strong performance in EDPs, with an exponential gap over belief-agnostic policies, and that aggregate trajectory return is generally insufficient for recovering per-turn Bayesian advantage.

    \item We introduce \textsc{ECHO}, a clipped policy-gradient objective for epistemic credit assignment over posterior-sensitive turn rewards, and evaluate it in the Clue Selector Game, a controlled EDP benchmark with explicit beliefs and measurable posterior contraction. \textsc{ECHO} outperforms trajectory-level GRPO and prompting baselines in resolution, information gain, and sample efficiency, while showing strong grounding, recovery, and calibration with little visible reasoning.

\end{enumerate}

\vspace*{-3mm}
\section{Related Work}

\vspace*{-3mm}
\paragraph{Temporal credit assignment and in-context learning}
Standard temporal credit assignment~\citep{barto2003recent,schulman2017proximalpolicyoptimizationalgorithms,shao2024deepseekmath,sutton1984temporal} can propagate delayed rewards to earlier turns in a task, but unless the state or reward explicitly represents belief change, it does not identify whether an action reduced uncertainty under the belief state at that moment.
LLM agents' full query--observation history can be placed \textit{in-context}~\citep{brown2020language,olsson2022context}; the model could, in principle, update its belief in-context and ask the next informative question. However, recent work suggests that in-context updating is sensitive to attention biases, context-length, and implicit priors shaped by pretraining~\citep{kossen2024incontext,liu-etal-2024-lost, falck2024context}. Bayesian methods such as BED-LLM compute informative queries at inference time using expected information gain over the variable of interest~\citep{choudhury2025bed,echarghaoui2026balar,hu2026playing20questiongame,mazzaccara-etal-2024-learning,zhang-etal-2024-probing}. However, these methods also highlight why belief-sensitive learning is difficult in realistic language tasks: the hypothesis space's pretraining and implicit priors make the agent's posterior difficult to inspect or credit at each step. Moreover, inference-time strategies~\citep{jeong2025reflectthenplan,yao2022react} must recompute query utility at every turn; \nk{they do not bake in adaptive querying to the policy.}

\vspace*{-3mm}
\paragraph{Bayesian learning and belief-conditioned policies}
\an{Epistemic adaptivity builds on belief-conditioned decision making: agents represent uncertainty, update it from evidence, and act to make future decisions better informed. Bayes-Adaptive MDPs formalize this by augmenting state with a posterior over environment parameters or latent MDP identities~\citep{bellman1956problem,duff2002optimal,Ghavamzadeh_2015}, with deep RL extensions using approximate inference~\citep{Zintgraf2020VariBAD}. Language-based information seeking \textit{also} requires crediting actions for changing belief; otherwise, outcome-based RL can lock agents into low-information interaction patterns~\citep{zou2026informationselflockingreinforcementlearning}. 
Training-time Bayes-adaptive methods such as BARL encourage reflective exploration within generated reasoning traces~\citep{zhang2025markovianreflectiveexplorationbayesadaptive}, but recent work cautions that visible reasoning traces are not necessarily faithful evidence of reasoning competence~\citep{palod2025performative,samineni2025localcoherenceglobalvalidity}. \textsc{ECHO} via EDPs instead studies ``silent exploration'' across external query--observation turns in a multiturn setup, where adaptivity is expressed through belief-informative actions rather than verbalized rationales. }



\vspace*{-3mm}
\paragraph{Controlled environments for information-seeking agents}
 
\an{Common question-asking settings---including entity deduction, 20-questions-style tasks, active task disambiguation, and preference elicitation---study how agents infer hidden targets through interaction~\citep{andukuri2024stargateteachinglanguagemodels,kobalczyk2025activetaskdisambiguationllms,lialfa,zhang-etal-2024-probing,zhou2024archer}. However, non-uniform LLM priors affect question selection and evaluation~\citep{mazzaccara-etal-2024-learning}; intermediate progress is usually tracked via hand-crafted knowledge-bases~\citep{zhao2016towards} or approximated through learned reward networks, process reward models, or inference-time information-gain estimates~\citep{choudhury2025bed,choudhury2025processrewardmodelsllm,hu2026playing20questiongame,lee2025prints}. Final accuracy can miss process-level failures such as information self-locking, poor grounding, or failure to recognize knowledge gaps~\citep{shao2025llm,zou2026informationselflockingreinforcementlearning}. Web-search, tool-use, and embodied environments expose richer interactions~\citep{bae2025alignsearchbeliefguidedexploratory,hua2026learningexplorescalingagentic,jinsearch,rezende2014stochastic,song2026large}, but their complexity can obscure which action improved the agent's knowledge state. Our \textbf{Clue Selector Game} (CSG) is a controlled complement: it preserves a language-mediated multi-turn loop---action, response, belief update, and subsequent action---while making the hypothesis space explicit, the posterior exactly measurable, and process-level metrics such as grounding, redundancy, zero-information behavior, recovery, and calibration directly computable.}

\begin{wrapfigure}{r}{0.45\linewidth}
    \vspace*{-10mm}
    \centering
    \includegraphics[
        width=\linewidth,
        trim=0.3in 0.25in 0.3in 0.25in, clip,
        clip
    ]{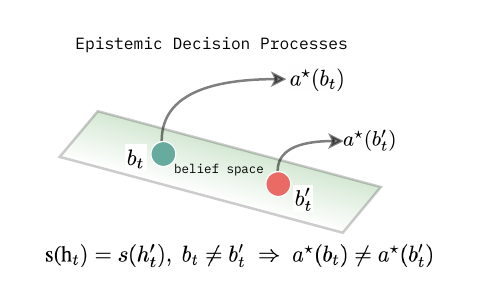}
    \caption{\small
EDP intuition. Two histories can share the \an{same belief-independent task context} \(s(h_t)=s(h'_t)\) while inducing different beliefs \(b_t\neq b'_t\) over the latent task variable. An \textit{epistemically adaptive} policy conditions on the belief state at each turn $t$ to choose different useful actions, so \(a^\star(b_t)\neq a^\star(b'_t)\). A belief-agnostic policy that depends only on \(s(h_t)\) and \(t\), but not \(b_t\), cannot make this distinction.
}
  
    \label{fig:edp}
    \vspace*{-6mm}
\end{wrapfigure}

\vspace*{-3mm}
\section{Epistemic Decision Processes}
\label{sec:edp_theory}
\vspace*{-3mm}




We now formalize multi-turn information seeking as a decision problem over beliefs. An agent acts under uncertainty about a latent task variable \(z^\star\), chooses actions that elicit external observations, and updates its belief \nk{from those observations} before choosing the next action. The key EDP property is that the main dense reward is \textit{posterior-sensitive}: the same question can be useful under one candidate set and uninformative under another.



This differs from single-response reasoning settings, where adaptation occurs primarily within a generated trace before feedback is received~\citep{zhang2025markovianreflectiveexplorationbayesadaptive}. In an EDP, the policy interacts with an external environment over multiple turns: each action produces an observation, the observation changes the posterior, and the next action should be chosen under the updated belief~\citep{abdulhai2023lmrl}. 
\nk{Intuitively, optimal policies in this setting are}
\emph{epistemically adaptive}: they choose actions because they are useful under the current belief, not merely because they are plausible \nk{given the observable} surface history.



We formalize this as an \emph{Epistemic Decision Process} (EDP), a finite-horizon interaction defined by:
\[
    \mathcal{E}=\left(\mathcal{Z},\rho_0,\mathcal{A},\Omega,O,u,T\right).
\]

Here, \(z^\star \in \mathcal{Z}\) denotes the latent task variable sampled from prior \(\rho_0\)---for example, a target hypothesis, user intent, diagnosis, or answer state---while \(\mathcal{A}\), \(\Omega\), \(O\), \(u\), and \(T\) denote the action space, observation space, observation kernel, evidential utility, and horizon. At turn \(t\), the interaction history \(h_t=(a_0,o_0,\ldots,a_{t-1},o_{t-1})\) induces a posterior belief \(b_t(z):=\mathbb{P}(z^\star=z \mid h_t)\).\footnote{In LLM implementations, the serialized prompt may include a textual state summary, such as remaining hypotheses. We treat this as an encoding of the epistemic component \(b_t\).} The agent samples an action \(a_t \sim \pi_\theta(\cdot \mid h_t)\), receives an external observation \(o_t \sim O(\cdot \mid z^\star,h_t,a_t)\), and \an{updates its posterior via a Bayesian belief-update operator \(b_{t+1}=\tau(b_t,h_t,a_t,o_t)\), where \(\tau\) is induced by \(O\). }Fig.~\ref{fig:edp} shows a conceptualization of the EDP.

\vspace*{-3mm}
\paragraph{LLM Instantiation}
The policy \(\pi_\theta\) is an autoregressive LLM that gets a serialized history \(h_t\), including the task instruction, current state summary, and previous action--observation pairs. The model generates a structured output that is parsed as the action \(a_t\), such as a query or information-seeking move. The observation \(o_t\) from the environment 
\nk{is used to update} the environment state by filtering the hypotheses still consistent with the evidence. Thus, the transition is not merely deterministic text concatenation~\citep{samineni2026rlonlyanalyzingstructural}: although the next prompt appends \((a_t,o_t)\), the observation itself is produced by an external kernel \(o_t\sim O(\cdot\mid z^\star,h_t,a_t)\). Each turn is therefore a genuine environment interaction, and the decision-relevant state is the posterior uncertainty induced by the history, i.e., the augmented epistemic state \((s(h_t),b_t,t)\), rather than only the visible text or turn index.

\vspace{-3mm}

\paragraph{Bayesian Value and Advantage}
\label{ssec:bayesian_value_advantage}
To train an epistemically adaptive policy (Fig.~\ref{fig:edp}), we need a credit signal that evaluates actions under the agent's current posterior belief, rather than only under the final episode outcome. For a fixed latent task \(z\), let
\(Q_z^\pi(h_t,a_t)=\EE_{\pi,O}[\sum_{s=t}^{T-1}u(b_s,a_s,o_s)+R_T(h_T,z)\mid h_t,a_t,z^\star=z]\)
denote the value of taking action \(a_t\) after history \(h_t\), where \(R_T\) is any terminal task reward. Since the agent does not observe \(z^\star\), the relevant value is posterior-weighted: \(Q_{\mathrm{B}}^\pi(h_t,a_t)=\EE_{z\sim b_t}[Q_z^\pi(h_t,a_t)]\). The corresponding Bayesian advantage is \(A_{\mathrm{B}}^\pi(h_t,a_t)=Q_{\mathrm{B}}^\pi(h_t,a_t)-\EE_{a'\sim\pi(\cdot\mid h_t)}[Q_{\mathrm{B}}^\pi(h_t,a')]\). \an{
Since \(b_t=b(h_t)\), the dependence on the current belief is implicit in the history argument; equivalently, \(A_{\mathrm{B}}^\pi(h_t,a_t)\) evaluates the action under the posterior induced by \(h_t\).} Intuitively, this asks whether \(a_t\) is better than alternative actions available under the same posterior belief. At first glance, estimating this quantity exactly would require branching multiple actions from the same history \(h_t\), which is expensive in multi-turn interactive environments for LLMs~\citep{abdulhai2023lmrl,guo2025segment,kazemnejad2025vinepporefiningcreditassignment,zhang2025markovianreflectiveexplorationbayesadaptive}. This raises two crucial questions: what policy class is this Bayesian advantage meant to favor, and how can we approximate its credit signal without exhaustive same-history branching? We first answer the policy question by distinguishing belief-agnostic policies from epistemically adaptive ones; we then introduce a tractable Monte Carlo policy-gradient estimator that uses turn-level epistemic rewards as a practical surrogate for the belief-conditioned advantage.
\vspace{-4mm}

 \subsection{Distinguishing Epistemically Adaptive from Belief-Agnostic Policies}
\label{ssec:epistemically_adaptive_policies}
 \vspace{-3mm}
 
\an{\nk{An {\it epistemically adaptive policy} uses the {\it belief} induced by the interaction history, not just the history itself.} Let \(s(h_t)\) denote the belief-independent task context, as shown in Fig.~\ref{fig:edp}. These variables constrain what the agent can do at turn \(t\), but do not themselves encode the posterior over \(z^\star\). A policy is \emph{belief-agnostic} if its dependence on history factors only through this interface state, i.e., there exists a decision rule \(g\) such that \(\pi(a_t\mid h_t)=g(a_t\mid s(h_t))\). In contrast, a policy is \emph{epistemically adaptive} if its action distribution can also depend on the posterior belief induced by the history, i.e., there exists a decision rule \(f\) such that \(\pi(a_t\mid h_t)=f(a_t\mid s(h_t),b_t)\). We say the policy is \emph{strictly} epistemically adaptive if there exist histories \(h_t,h'_t\) with the same interface state \(s(h_t)=s(h'_t)\) but different beliefs \(b_t\neq b'_t\), such that \(\pi(\cdot\mid h_t)\neq\pi(\cdot\mid h'_t)\).}


\begin{theorybox}

\paragraph{Silent exploration.}
When the policy adapts to the posterior without emitting an explicit natural-language rationale, \nk{we call this behavior \emph{silent exploration}}. A silently exploratory policy changes what it does in response to evidence, even if its output is only a compact structured action. This matters because visible reasoning is neither necessary nor sufficient for epistemic adaptivity: as our experimental results suggest (Sec.~\ref{sec:experiments}), a model may produce long explanations without acting on the current belief, while another may output no rationale and still act adaptively.

\end{theorybox}

\begin{theorybox}
\begin{proposition}[Belief-state sufficiency]
Assume the observation kernel and utility function depend on \(h_t\) only through the surface state \(s(h_t)\) and belief \(b_t\). Then there exists an optimal policy $\pi^\star$ for the EDP that is Markovian in the augmented epistemic state \((s(h_t),b_t,t)\). Full proof is in Appendix~\ref{app:proofs}.
\label{proposition:belief_sufficiency_main}
\end{proposition}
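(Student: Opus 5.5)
The plan is the standard finite-horizon POMDP reduction to a belief MDP, proved by backward induction on \(t\). First I would check that the augmented state \(x_t:=(s(h_t),b_t,t)\) evolves as a controlled Markov chain under any policy. Given \(h_t\) and \(a_t\), the predictive law of the next observation is
\[
\mathbb{P}(o_t\in\cdot\mid h_t,a_t)=\sum_{z} b_t(z)\,O(\cdot\mid z,h_t,a_t),
\]
and by hypothesis \(O(\cdot\mid z,h_t,a_t)\) depends on \(h_t\) only through \((s(h_t),b_t)\). So this predictive law is a function of \((x_t,a_t)\) alone. The belief update \(b_{t+1}=\tau(b_t,h_t,a_t,o_t)\) is Bayes' rule with this same kernel, so it too is a function of \((x_t,a_t,o_t)\). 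I would also assume, as is implicit in calling \(s\) an interface state, that \(s(h_{t+1})\) is determined by \((s(h_t),a_t,o_t)\). Hence \(x_{t+1}\) is a measurable function of \((x_t,a_t,o_t)\), and its conditional law given \((h_t,a_t)\) depends only on \((x_t,a_t)\).

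Second, I would rewrite the objective in terms of \(x_t\). By the tower property, the expected return of any history-dependent policy equals \(\mathbb{E}\big[\sum_{t<T}\bar u(x_t,a_t)+\bar R_T(x_T)\big]\). Here \(\bar u(x,a)=\mathbb{E}_{o}[u(b,a,o)\mid x,a]\), whose integrand depends on \(h_t\) only through \((s(h_t),b_t)\) by assumption. For the terminal reward I would use \(\bar R_T(x_T)=\mathbb{E}_{z\sim b_T}[R_T(h_T,z)]\), taking \(R_T\) to depend on \(h_T\) only via \(s(h_T)\); this is the natural reading of the hypothesis. The conditioning on \(z^\star\) disappears because \(z^\star\) enters only through the posterior. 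This step is the main obstacle. One must verify carefully that \(\mathbb{E}[\,\cdot\mid h_t]\) of future quantities equals their expectation under \(z\sim b_t\), and that the hypothesis is strong enough to cover the kernel, the update, the utility and the terminal reward. I would state explicitly whatever extra measurability or transition condition on \(s\) is needed rather than hide it.

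Third, I would run backward induction. Define \(V_T(x)=\bar R_T(x)\) and
\[
V_t(x)=\max_a\Big\{\bar u(x,a)+\mathbb{E}\big[V_{t+1}(x_{t+1})\mid x_t=x,a_t=a\big]\Big\}.
\]
The maximum exists when \(\mathcal{A}\) is finite, as it is for LLM outputs of bounded length; otherwise I would use \(\epsilon\)-optimal selectors. I would then prove by induction that for every history-dependent policy \(\pi\), the value-to-go from \(h_t\) is at most \(V_t(x_t)\). This uses the Markov property above together with the fact that \(\pi(\cdot\mid h_t)\) is just some mixture over actions. A greedy selector \(\pi^\star(x)\in\arg\max\) attains equality. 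Therefore \(\pi^\star\), which depends only on \((s(h_t),b_t,t)\), is optimal among all history-dependent policies.
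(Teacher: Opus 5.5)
Your proposal is correct and follows essentially the same route as the paper's appendix proof: show that the predictive observation law, the Bayes update, the surface-state update and the rewards are all functions of \((s(h_t),b_t,t)\) and the action, then run backward induction and take a greedy selector. The extra transition condition on \(s\) that you flag is exactly the paper's explicit appendix hypothesis \(s(h_{t+1})=F(s(h_t),a_t,o_t)\), and without it the claim can fail; your induction showing that \(V_t\) upper-bounds the value-to-go of every history-dependent policy supplies the domination step that the paper only asserts ``by the Bellman recursion.''
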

\end{theorybox}
\vspace{-3mm}

\paragraph{Remark on Proposition~\ref{proposition:belief_sufficiency_main}} Once the belief \(b_t\) is part of the state, optimal behavior can be written as a Markov policy over \((s(h_t),b_t,t)\). The important question is therefore \textbf{not} whether a Markov policy exists, but whether the training signal can teach the policy to use the belief as uncertainty is resolved~\citep{zhang2025markovianreflectiveexplorationbayesadaptive}. In long-horizon information-seeking tasks, this is a depth-wise credit-assignment problem: at each turn, the useful action depends on the current posterior, and failing to adapt even once can derail the remaining trajectory. The first part of Theorem~\ref{thm:necessity} makes this sharp: there are EDPs where an adaptive policy succeeds with probability \(1\), while any belief-agnostic policy succeeds with probability at most \(2^{1-d}\), decaying exponentially with horizon. The second part of Theorem~\ref{thm:necessity}  then shows why sparse trajectory-level returns are a poor learning signal for this behavior: they reveal final success, but not the belief-conditioned credit of each intermediate action. Together, these results motivate turn-level epistemic credit rather than trajectory-level credit.

\begin{theorybox}
\begin{theorem}[Adaptive policies require epistemic credit]
\label{thm:necessity}
Let $\mathcal{E}$ be an Epistemic Decision Process.
\begin{enumerate} 
    \item \textbf{(Performance gap.)}
    For every depth $d \geq 1$, there exists an EDP $\mathcal{E}_d$ with
    horizon $d$ in which an epistemically adaptive policy achieves
    $\mathbb{P}(\mathrm{success}) = 1$, while every belief-agnostic policy
    satisfies $\mathbb{P}(\mathrm{success}) \leq 2^{1-d}$. Thus, the cost of ignoring belief compounds exponentially with the decision horizon.

    \item \textbf{(Estimator insufficiency.)}
    Let $G = \sum_{s=0}^{T-1} r_s$ be the total episode return and fix
    any turn $t$. For any trajectory-level \nk{advantage} estimator
    $\widehat{A}_t^{\mathrm{traj}} = \phi(G)$,
    \[
        \inf_{\phi}\,
        \mathbb{E}\!\left[
            \bigl(\phi(G) - A_{\mathrm{B}}^\pi(h_t,a_t)\bigr)^2
        \right]
        =
        \mathbb{E}\!\left[
            \operatorname{Var}\!\left(
                A_{\mathrm{B}}^\pi(h_t,a_t) \mid G
            \right)
        \right] > 0
    \]
 whenever the same total return can arise from belief-distinct
decisions with different Bayesian advantages.
\end{enumerate}

Together, (i) and (ii) show that, for a family of EDPs where a useful next action depends on the current posterior, epistemic adaptivity is necessary for strong long-horizon performance, and aggregate trajectory return is too coarse to recover the per-turn Bayesian advantage. It is not that full-return policy gradients are invalid for optimizing expected return; rather, total return alone aliases belief-distinct decisions that require different local credit. Full proofs are in Appendix~\ref{app:proofs}. 
 
\end{theorem}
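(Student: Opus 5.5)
For part (i) I will build an explicit family of EDPs $\mathcal{E}_d$ in which the belief-independent interface state carries no information about $z^\star$, while the correct action at each turn depends on the bit just revealed. Let $\mathcal{Z}=\{0,1\}^d$ with uniform prior $\rho_0$, and let $\mathcal{A}=\{0,1\}$ (or $\{1,\dots,d\}\times\{0,1\}$ if a richer action set is wanted). At turn $t$ the observation $o_t$ reveals the bit $z^\star_t$, but only if $a_t=z^\star_{t-1}$ (with a fixed convention for $t=0$). Otherwise the episode enters an absorbing ``failure'' branch in which all later observations are uninformative. Success means every action was correct.

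I choose the interface state to be $s(h_t)=t$, together with any status flag whose distribution is the same under all $z$. This guarantees that $s(h_t)$ is independent of $z^\star$. The adaptive policy plays $a_t=\arg\max_b b_t(z_{t-1}=b)$. Its belief is a point mass on the revealed bit, so it succeeds with probability $1$.

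For the belief-agnostic bound, note that a policy $g(\cdot\mid s(h_t))$ selects $a_t$ independently of $z^\star_{t-1}$. Each bit is uniform and independent of everything the policy conditions on. So, conditionally on all earlier turns being correct, turn $t$ is correct with probability exactly $1/2$. The $t=0$ turn can be made trivially correct, which gives the $2^{1-d}$ rather than $2^{-d}$. Multiplying by the chain rule over the $d-1$ nontrivial turns yields $\mathbb{P}(\mathrm{success})\le 2^{1-d}$.

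For part (ii) I will use the standard $L^2$ projection argument. Write $X=A_{\mathrm{B}}^\pi(h_t,a_t)$, a square-integrable random variable under the trajectory distribution; this holds because rewards are bounded over a finite horizon. For any measurable $\phi$, decompose
\[
\mathbb{E}[(\phi(G)-X)^2]=\mathbb{E}[(\phi(G)-\mathbb{E}[X\mid G])^2]+\mathbb{E}[\operatorname{Var}(X\mid G)].
\]
The cross term vanishes by the tower property, since $\phi(G)-\mathbb{E}[X\mid G]$ is $\sigma(G)$-measurable. The infimum is therefore attained at $\phi^\star(g)=\mathbb{E}[X\mid G=g]$ and equals $\mathbb{E}[\operatorname{Var}(X\mid G)]$.

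Strict positivity is where the hypothesis does the work, and I expect it to be the delicate step. I have to formalize ``the same total return can arise from belief-distinct decisions with different Bayesian advantages'' as follows: there is a value $g$ with positive probability (or a positive-measure set of values) such that, conditional on $G=g$, the random variable $X$ takes at least two distinct values, each with positive conditional probability. Then $\operatorname{Var}(X\mid G=g)>0$ on a set of positive probability, so the expectation is positive.

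To show the condition is not vacuous, I would exhibit it inside $\mathcal{E}_d$ with a terminal reward only. A failing trajectory that erred at turn $t$ and one that erred at a later turn $t'>t$ both have $G=0$. Yet at turn $t$, $A_{\mathrm{B}}$ is negative in the first and positive in the second, because the correct action raises the success value from $0$ to a positive amount under the current posterior.
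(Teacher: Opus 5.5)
Your proposal is correct and follows the paper's proof essentially step for step. Part (i) is the same sequential bit-reveal chain with interface state $s(h_t)=t$, where symmetry of the uniform bit gives success probability $1/2$ at each of the $d-1$ nontrivial turns; your absorbing failure branch and index shift are cosmetic. Part (ii) is the same $L^2$ projection onto $\sigma(G)$-measurable functions, with your tower-property cancellation of the cross term standing in for the paper's Pythagoras step.

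Your non-vacuousness example inside $\mathcal{E}_d$ goes slightly beyond the paper. It does, however, need $\pi$ to be stochastic, with positive probability of both correct and incorrect actions. Under $\pi^\star$, or under any deterministic policy, $A_{\mathrm{B}}^\pi(h_t,a_t)\equiv 0$ along sampled trajectories, so the conditional variance vanishes.
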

\end{theorybox}

\vspace*{-3mm}
\section{Depth-wise Epistemic Credit Optimization (\textsc{ECHO})}
\label{sec:echo}
\vspace*{-3mm}
Theorem~\ref{thm:necessity} shows that aggregate trajectory return can alias belief-distinct decisions. More generally, two trajectories can have the same final reward while containing very different intermediate evidence-gathering behavior, so total return alone cannot identify which actions reduced uncertainty. This leads to our core insight—\textit{actions should receive credit based on whether they were useful under the belief state in which they were chosen, not only based on whether the entire episode eventually succeeded}. We therefore introduce \textsc{ECHO} (\textbf{E}pistemic \textbf{C}redit for \textbf{H}istory-Conditioned \textbf{O}ptimization), a practical turn-level surrogate for Bayesian advantage in multi-turn EDPs. Like GRPO~\citep{shao2024deepseekmath}, \textsc{ECHO} uses group-normalized advantages and a clipped policy-gradient objective~\citep{schulman2017proximalpolicyoptimizationalgorithms}; unlike trajectory-level GRPO, it normalizes posterior-sensitive turn rewards rather than full episode returns. As such, \textsc{ECHO} keeps the practical group-relative clipped optimization form of GRPO, but shifts the credit signal from trajectory-level success to turn-level “epistemic” progress. 


Specifically, for each training instance, we sample \(G\) parallel rollouts from the old policy. At turn \(t\), rollout \(g\) has history \(h_t^{(g)}\), belief \(b_t^{(g)}=\mathbb{P}(z^\star\mid h_t^{(g)})\), sampled action \(a_t^{(g)}\sim\pi_{\theta_{\mathrm{old}}}(\cdot\mid h_t^{(g)})\), and external observation \(o_t^{(g)}\). We assign a posterior-sensitive turn reward:
\[
    r_t^{(g)}
    =
    u\!\left(b_t^{(g)},a_t^{(g)},o_t^{(g)}\right),
\]
which measures local epistemic progress, such as information gain, reduction in posterior uncertainty, or elimination of inconsistent hypotheses. Sec.~\ref{ssec:clue_task} shows how this reward design fits into an EDP environment. At each decision depth \(t\), we normalize these rewards across the active rollouts: 

    \begin{equation}
    \widehat{A}_{t}^{(g),\mathrm{ECHO}}
    =
    \frac{
        r_t^{(g)}
        -
        \frac{1}{|\mathcal{G}_t|}
        \sum_{j\in\mathcal{G}_t} r_t^{(j)}
    }{
        \operatorname{std}_{j\in\mathcal{G}_t}(r_t^{(j)})+\epsilon
    }
    \label{eq:echo_advantage}
\end{equation}
Unlike trajectory-level GRPO, which broadcasts a single full-episode advantage across all actions in a rollout, \textsc{ECHO}—as shown in Eq.~\ref{eq:echo_advantage}—computes group-normalized advantages at each decision depth. This gives each structured turn action credit according to the epistemic progress it produced under the current posterior~\citep{jin2026dgpo, nath2026owen, samineni2026rlonlyanalyzingstructural}. We note that our approach is not an exact belief-local advantage after rollouts diverge, since different rollouts may reach different histories and beliefs at the same turn—given the exponential state-action spaces in language based RL using LLMs~\citep{dong2024rlhf, srivastava2024policy, zhou2024archer}. The key point that each reward is evaluated relative to its rollout's own posterior, so the update preserves belief-dependent variation in action quality while avoiding exhaustive same-history branching.
Let
\[
    \rho_t^{(g)}(\theta)
    =
    \frac{
        \pi_\theta(a_t^{(g)}\mid h_t^{(g)})
    }{
        \pi_{\theta_{\mathrm{old}}}(a_t^{(g)}\mid h_t^{(g)})
    },
\]
where for LLM agents \(\pi_\theta(a_t^{(g)}\mid h_t^{(g)})\) is the product of token probabilities for the structured action. \textsc{ECHO} optimizes the clipped objective:
\begin{theorybox}
\begin{equation}
    \mathcal{J}_{\mathrm{ECHO}}(\theta)
    =
    \mathbb{E}
    \left[
    \frac{1}{\sum_t |\mathcal{G}_t|}
    \sum_{t=0}^{T-1}
    \sum_{g\in\mathcal{G}_t}
    \min\!\left(
        \rho_t^{(g)}\widehat{A}_{t}^{(g)},\;
        \operatorname{clip}\!\left(
            \rho_t^{(g)},
            1-\eta,
            1+\eta
        \right)
        \widehat{A}_{t}^{(g)}
    \right)
    \right]
    \label{eq:echo-objective}
\end{equation}
\textbf{Remark.}~\textsc{ECHO} retains the group-relative clipped 
trust-region form of standard policy gradient methods~\citep{schulman2015trust,sutton2018reinforcement,williams92}, but replaces 
trajectory-level success with turn-level epistemic progress as the 
credit signal.
\end{theorybox}

\vspace*{-3mm}
\section{Experiments}
\label{sec:experiments}
\vspace*{-3mm}
\subsection{Clue Selector Game Environment as an EDP}
\label{ssec:clue_task}

\vspace*{-3mm}
\begin{figure*}[t]
    \centering
    \includegraphics[width=\textwidth]{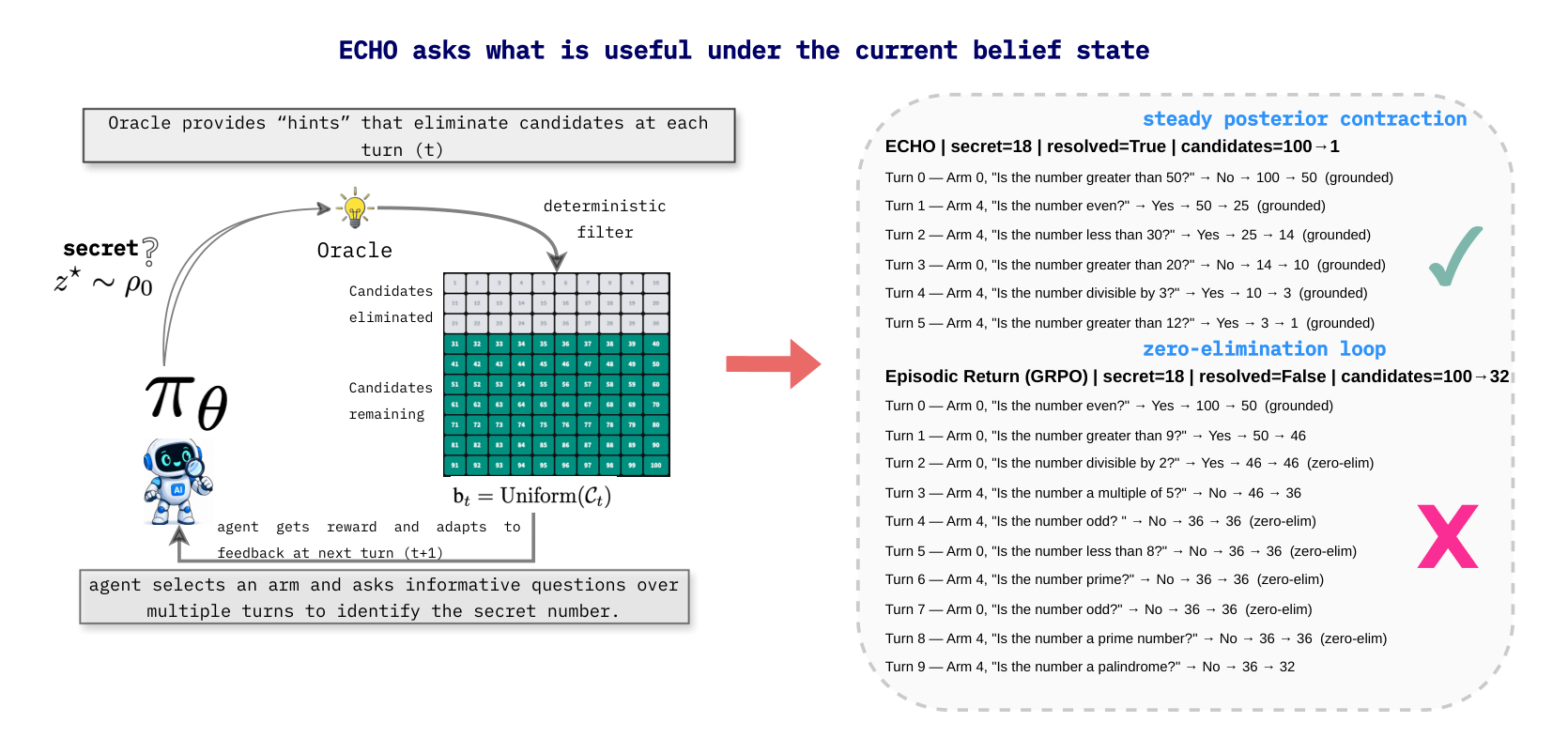}
\vspace*{-3mm}
    \caption{
    Clue Selector Game overview and example trajectory from real evaluation data: \textsc{ECHO} contracts the belief state, while episodic-return GRPO stalls with zero-elimination queries.
    }
    \label{fig:clue_overview}
\vspace*{-3mm}
\end{figure*}

We instantiate EDPs through the \emph{Clue Selector Game} (CSG), a controlled multi-turn information-seeking task inspired by prior work in multiturn RL~\citep{abdulhai2023lmrl}. In the CSG, the agent must identify a hidden secret number by sequentially querying a clue-holder oracle that knows the secret but is not allowed to reveal it directly. Instead, the oracle can only answer property-based questions about the secret. To make the action space structured\footnote{The action space is still diverse due to the query component, which admits open-ended question generation.}, the agent acts through an \emph{arm--question interface}: at each turn it routes a natural-language property question through one of several clue-holder arms, where each arm ``owns'' certain domains of information. This setup lets us test whether a policy can do more than ask plausible questions in isolation: under the current candidate set, a successful agent identifies {\it what information is most useful} to seek (the arm), avoids repeatedly asking for information already obtained, and adapts its query strategy as its belief over the secret changes. The arms therefore provide a lightweight yet explicit scaffolding to examine the adaptivity of the query strategy. Figs.~\ref{fig:clue_overview} and~\ref{fig:csg-prompts} show a high-level overview of the CSG environment and the prompts used, respectively. 

\paragraph{Game dynamics}  A secret integer \(n^\star \in \{1,\ldots,N\}\) is sampled uniformly at random; we use \(N=100\) for training, while evaluation can include secrets up to \(N=200\) to test generalization. \an{At each turn, the agent poses a natural-language property
question to a fixed oracle that answers truthfully without
revealing the secret. The action space pairs each question
with a categorical arm label that groups questions by
property family (e.g., parity, divisibility, range);
this provides lightweight structure over the action space (see above)
but does not affect the oracle's response. The dominant
reward is posterior-sensitive candidate elimination; a
small shaping bonus for arm--question consistency is
included but is not the focus of our analysis.}

\vspace*{-3mm}
\paragraph{CSG's explicit EDP interpretation} The latent task variable is the secret number \(z^\star=n^\star\). The history \(h_t\) consists of the previous arm selections, questions, and oracle observations. The belief state \(b_t\) is the posterior over the secret number induced by this history. Because the secret is sampled uniformly and oracle observations only eliminate inconsistent candidates, the posterior is uniform over the remaining candidate set:
\[
    b_t(n)=
    \begin{cases}
    1/|\mathcal{C}_t|, & n\in\mathcal{C}_t,\\
    0, & n\notin\mathcal{C}_t.
    \end{cases}
\]
For example, if the initial candidates are \(\{1,\ldots,10\}\) and the agent asks whether the number is even, a truthful ``yes'' response updates the candidate set to \(\{2,4,6,8,10\}\), making the posterior uniform over those five numbers. Thus, in CSG, the candidate set is not merely a textual state variable; it is a direct representation of the agent's epistemic state.

Turn-level utility measures epistemic progress. The dominant reward term is fractional candidate elimination, \(r_{\mathrm{info}}=(|\mathcal{C}_t|-|\mathcal{C}_{t+1}|)/|\mathcal{C}_t|\), which measures how much the action contracts the current posterior. Training also includes small shaping terms for valid arm--question matching and query diversity, a redundancy penalty for repetition, and sparse bonuses for resolving the task efficiently. \an{The action value is therefore belief-dependent: two histories may share the same belief-independent task context---for example, the same turn index, action schema, and remaining budget---while inducing different posteriors over \(z^\star\), and therefore requiring different useful actions. }
The posterior-sensitive reward signal is used by \textsc{ECHO} and other RL baselines for training; it assigns credit based on the epistemic progress made at the turn where the action was chosen, rather than broadcasting only the final episode outcome. CSG is therefore a clean testbed for our theory. 

\vspace*{-3mm}
\paragraph{Evaluation Strategy}
Our metrics test whether a policy uses accumulated evidence to choose useful actions throughout the trajectory, rather than merely succeeding at the end or producing more reasoning text.
In CSG, optimal information seeking is not determined by turn number alone, but by the structure of the current posterior over candidates, \an{i.e., the specific composition of the surviving candidate set $\mathcal{C}_t$}. An \textit{epistemically adaptive} agent should therefore ask questions that reduce uncertainty under the current \(\mathcal{C}_t\), avoid redundant queries, recover after low-information observations, and not ask broad questions if the belief state is narrow. Accordingly, we evaluate not only \textbf{final task resolution}, but also \textbf{process-level epistemic quality}~\citep{shao2025llm} as a proxy for epistemic adaptability (see Sec.~\ref{sec:results}). We report \textbf{Zero}, the rate of non-redundant turns that eliminate no candidates; \textbf{Qual.}, the candidate-elimination quality normalized by an ideal split; and \textbf{Ground}, the rate of turns that are both non-redundant and informative under the current candidate set. To measure robustness after local epistemic failures, we also report \textbf{GRecover}, the rate at which the next turn after a zero-elimination action is \textit{grounded} as defined above, and \textbf{ResAfterZero}, the fraction of episodes that still resolve after at least one zero-elimination event. Finally, \textbf{Reason\%} measures explicit reasoning text outside the required action format. \an{Further details on evaluation metrics are provided in Appendix~\ref{ssec:main_task_epistemic_metrics_full}.}
 
Empirically, these metrics are strongly tied to task success: across evaluation episodes, resolution correlates positively with elimination quality and groundedness, and negatively with zero-elimination and low-information behavior, supporting their use as indicators of epistemically adaptive behavior.

\vspace*{-3mm}
\paragraph{Baselines}
\textbf{Frontier or proprietary models} include Claude Sonnet 4.6, OpenAI's o3-mini, GPT-4o, GPT-4.1-mini, GPT-4.1-nano, and GPT-4o-mini, evaluated zero-shot with the standard task prompt. These models represent strong off-the-shelf performance without task-specific training. \textbf{Prompting variants} evaluate Qwen2.5-1.5B Instruct with explicit reasoning prompts: chain-of-thought (CoT), full ReAct~\citep{yao2022react} using a complete Thought/Action/Observation loop with growing history context, and a simple version of ReAct using a Thought/Action format in the system prompt. These baselines test whether eliciting explicit reasoning can substitute for RL training on the same base model. We also include \textbf{GPT-4o w/ full ReAct}, which tests whether a structured reasoning loop improves a capable frontier model. \textbf{RL baselines} are trained on Qwen2.5-1.5B Instruct using the same environment and reward function as \textsc{ECHO}, differing only in the advantage estimator. The main RL baseline is \textbf{episode-return RL}, which broadcasts the total episode reward uniformly to all turns (and thereby, to tokens) consistent with the standard GRPO formulation when extended to multi-turn settings; we also include RLOO~\citep{ahmadian-etal-2024-back} per-turn to isolate the effect of per-turn relative credit. \textbf{Offline SFT+RL} is a single-turn offline baseline trained from pre-collected trajectories, representing training on fixed demonstrations rather than online interaction. \textbf{Base Qwen 2.5} at sizes from 0.5B--14B provide a scaling reference without task-specific prompting or RL.  Finally, a \textbf{greedy oracle} reference policy with access to the secret during action selection, that chooses the arm-question pair with maximal candidate elimination, serves as an “upper-bound”~\citep{analyzingnas} diagnostic.

\vspace*{-3mm}
\section{Results}
\label{sec:results}

\begin{table*}[t]
\centering
\scriptsize
\setlength{\tabcolsep}{3.6pt}
\renewcommand{\arraystretch}{1.08}
\resizebox{\linewidth}{!}{
\begin{tabular}{lccccccc}
\toprule
\textbf{Model} &
\textbf{Resolve$\uparrow$} &
\textbf{Zero$\downarrow$} &
\textbf{Qual.$\uparrow$} &
\textbf{Ground$\uparrow$} &
\textbf{GRecover$\uparrow$} &
\textbf{ResAfterZero$\uparrow$} &
\textbf{Reason\%} \\
\midrule

\multicolumn{8}{l}{\textit{Frontier baselines}} \\
\hdashline
Claude Sonnet 4.6
& \ms{43.1}{1.2}
& \ms{25.4}{0.4}
& \textbf{\ms{0.701}{0.005}}
& \textbf{\ms{0.738}{0.005}}
& \underline{\ms{0.185}{0.018}}
& \ms{0.168}{0.012}
& \ms{52.3}{0.5} \\

o3-mini
& \ms{27.1}{4.9}
& \ms{29.8}{1.9}
& \ms{0.610}{0.078}
& \ms{0.636}{0.082}
& \ms{0.248}{0.009}
& \ms{0.151}{0.027}
& \ms{0.3}{0.3} \\

GPT-4.1-mini
& \ms{31.6}{2.4}
& \ms{28.7}{0.6}
& \ms{0.615}{0.002}
& \ms{0.640}{0.003}
& \ms{0.285}{0.003}
& \ms{0.159}{0.012}
& \ms{0.0}{0.0} \\

GPT-4o
& \ms{25.8}{2.7}
& \ms{29.8}{1.2}
& \ms{0.625}{0.009}
& \ms{0.662}{0.007}
& \ms{0.286}{0.006}
& \ms{0.156}{0.017}
& \ms{0.0}{0.0} \\

GPT-4o + Full ReAct
& \ms{25.3}{1.3}
& \ms{28.3}{0.9}
& \ms{0.618}{0.002}
& \ms{0.625}{0.001}
& \ms{0.231}{0.014}
& \ms{0.087}{0.010}
& \ms{100.0}{0.0} \\

GPT-4o-mini
& \ms{22.7}{1.5}
& \ms{30.2}{0.4}
& \ms{0.563}{0.006}
& \ms{0.588}{0.006}
& \ms{0.352}{0.025}
& \ms{0.128}{0.010}
& \ms{0.0}{0.0} \\

GPT-4.1-nano
& \ms{10.2}{1.8}
& \ms{35.6}{0.6}
& \ms{0.496}{0.005}
& \ms{0.528}{0.010}
& \ms{0.327}{0.017}
& \ms{0.083}{0.014}
& \ms{0.0}{0.0} \\

\midrule
\multicolumn{8}{l}{\textit{Base and prompting baselines}} \\
\hdashline
Qwen2.5-14B
& \ms{22.7}{2.0}
& \ms{27.3}{0.7}
& \ms{0.535}{0.007}
& \ms{0.537}{0.008}
& \ms{0.290}{0.014}
& \ms{0.125}{0.011}
& \ms{1.8}{0.2} \\

Qwen2.5-7B
& \ms{20.9}{0.9}
& \ms{34.0}{0.5}
& \ms{0.502}{0.002}
& \ms{0.509}{0.003}
& \ms{0.200}{0.010}
& \ms{0.105}{0.024}
& \ms{2.2}{0.6} \\

Qwen2.5-3B
& \ms{8.4}{1.2}
& \ms{36.0}{1.0}
& \ms{0.454}{0.006}
& \ms{0.454}{0.007}
& \ms{0.280}{0.008}
& \ms{0.072}{0.017}
& \ms{0.8}{0.1} \\

Qwen2.5-1.5B
& \ms{8.0}{0.0}
& \ms{38.6}{1.0}
& \ms{0.352}{0.002}
& \ms{0.340}{0.005}
& \ms{0.229}{0.025}
& \ms{0.064}{0.004}
& \ms{1.1}{0.2} \\

Qwen2.5-1.5B + CoT
& \ms{6.7}{1.3}
& \ms{38.4}{0.6}
& \ms{0.362}{0.005}
& \ms{0.351}{0.007}
& \ms{0.223}{0.017}
& \ms{0.058}{0.012}
& \ms{1.0}{0.1} \\

Qwen2.5-1.5B + ReAct
& \ms{6.2}{1.2}
& \ms{40.7}{0.9}
& \ms{0.348}{0.007}
& \ms{0.339}{0.004}
& \ms{0.243}{0.020}
& \ms{0.058}{0.009}
& \ms{14.8}{0.7} \\

Qwen2.5-1.5B + Full ReAct
& \ms{4.4}{1.2}
& \ms{49.2}{1.1}
& \ms{0.312}{0.003}
& \ms{0.316}{0.002}
& \ms{0.205}{0.008}
& \ms{0.036}{0.012}
& \ms{100.0}{0.0} \\

Qwen2.5-0.5B
& \underline{\ms{3.6}{0.4}}
& \underline{\ms{50.2}{2.1}}
& \underline{\ms{0.281}{0.013}}
& \underline{\ms{0.268}{0.014}}
& \ms{0.189}{0.008}
& \underline{\ms{0.027}{0.008}}
& \ms{1.5}{0.3} \\

\midrule
\multicolumn{8}{l}{\textit{RL-trained Qwen2.5-1.5B baselines}} \\
\hdashline
\rowcolor{grposhade}
Episodic Return (GRPO)
& \ms{14.7}{1.5}
& \ms{39.2}{0.8}
& \ms{0.445}{0.008}
& \ms{0.446}{0.008}
& \ms{0.339}{0.004}
& \ms{0.093}{0.021}
& \ms{16.0}{1.5} \\

RLOO per-turn
& \ms{13.8}{1.6}
& \ms{40.2}{0.3}
& \ms{0.446}{0.006}
& \ms{0.445}{0.007}
& \ms{0.297}{0.007}
& \ms{0.126}{0.016}
& \ms{1.2}{0.1} \\

Offline SFT+RL
& \ms{13.3}{1.3}
& \ms{41.2}{0.8}
& \ms{0.392}{0.003}
& \ms{0.372}{0.005}
& \ms{0.206}{0.012}
& \ms{0.089}{0.012}
& \ms{1.0}{0.1} \\

\midrule
\rowcolor{echoshade}
\textbf{\textsc{ECHO} (Ours)}
& \textbf{\ms{45.3}{1.5}}
& \textbf{\ms{18.6}{1.0}}
& \ms{0.670}{0.008}
& \ms{0.718}{0.006}
& \textbf{\ms{0.406}{0.006}}
& \textbf{\ms{0.318}{0.018}}
& \ms{0.9}{0.0} \\

\midrule
\rowcolor{oracleshade}
Greedy oracle (reference)
& \ms{86.2}{0.4}
& \ms{29.6}{0.6}
& \ms{0.700}{0.006}
& \ms{0.704}{0.006}
& \ms{0.031}{0.001}
& \ms{0.088}{0.003}
& \ms{0.0}{0.0} \\

\bottomrule
\end{tabular}
}
\caption{
\textbf{Main CSG results}. Values are mean with SEM shown in subscript across \textbf{3} independent runs on the evaluation set. Resolve, Zero, and Reason\% are reported in percentage points; all other metrics are rates or normalized scores. Bold marks the best learned/model result by metric direction, and underlining marks the worst learned/model result. 
}
\vspace*{-4mm}
\label{tab:csg_main_results}
\end{table*}
\vspace*{-3mm}
\paragraph{ECHO improves both resolution and epistemic behavior.}
Table~\ref{tab:csg_main_results} shows that \textsc{ECHO} achieves the strongest learned-policy performance in CSG. Its resolve rate of \(45.3\%\) exceeds the best frontier baseline, Claude Sonnet 4.6 (\(43.1\%\)), and substantially outperforms episodic-return GRPO (\(14.7\%\)), RLOO per-turn (\(13.8\%\)), and all base/prompting baselines. More importantly, the improvement is reflected in process-level epistemic behavior: \textsc{ECHO} has the lowest zero-elimination rate (\(18.6\%\)), indicating that its questions most often update the agent belief state. While Claude attains slightly higher per-turn quality and groundedness, \textsc{ECHO} combines high elimination quality (\(0.670\)) and groundedness (\(0.718\)) with much stronger trajectory-level robustness: after zero-elimination events, it has the highest grounded recovery (\(0.406\)) and resolution-after-zero rate (\(0.318\)). This suggests that \textsc{ECHO}'s advantage is not merely asking strong individual questions, but that \textit{the \textsc{ECHO}-trained policy learns more robust information-gathering behavior across the trajectory}: it asks well-grounded questions, avoids zero-elimination actions, and recovers when a local epistemic failure occurs. \textsc{ECHO} also achieves these gains with almost no explicit reasoning text (\(0.9\%\)) exposing its silently exploratory nature, whereas ReAct variants reason on every turn but do not improve resolution, indicating that explicit reasoning traces alone do not substitute for epistemically adaptive training. \an{Arm-level diagnostics further show that 
\textsc{ECHO}'s arm--question choices remain informative even under same-arm reuse, suggesting belief-conditioned routing rather than superficial arm switching (full results in \Cref{tab:arm_routing_belief_diagnostics} in Appendix~\ref{app:binary_search_echo}).}
Additional results are reported in Appendix~\ref{app:training_dynamics} and~\ref{app:binary_search_echo}. 

\vspace*{-3mm}
\paragraph{Behavioral diagnostics reveal why \textsc{ECHO} succeeds.}
To understand whether \textsc{ECHO}'s gains come from genuine belief-state adaptation,
we analyze process-level diagnostics across policies in Fig.~\ref{fig:csg_behavior_diagnostics}. Results show that resolution is tightly linked to posterior contraction: models with fewer zero-elimination turns and higher elimination quality achieve higher solve rates, with \textsc{ECHO} occupying the favorable region of high resolution and low wasted updates. Fig.~\ref{fig:csg_behavior_diagnostics}(c) further shows that \textsc{ECHO} continues shrinking the candidate set across turns, whereas episodic-return GRPO and prompting baselines flatten earlier, indicating stalled information acquisition. Fig.~\ref{fig:csg_behavior_diagnostics}(d) shows that \textsc{ECHO} is more robust after local epistemic failures: following zero-elimination events, it produces more grounded recoveries and is substantially more likely to still resolve the episode. Together, these diagnostics support the central mechanism behind \textsc{ECHO}: it learns to ask questions that remain informative under the evolving belief state, rather than being merely plausible.
\vspace*{-3mm}
\begin{figure*}[t]
    \centering
    \includegraphics[width=\textwidth]{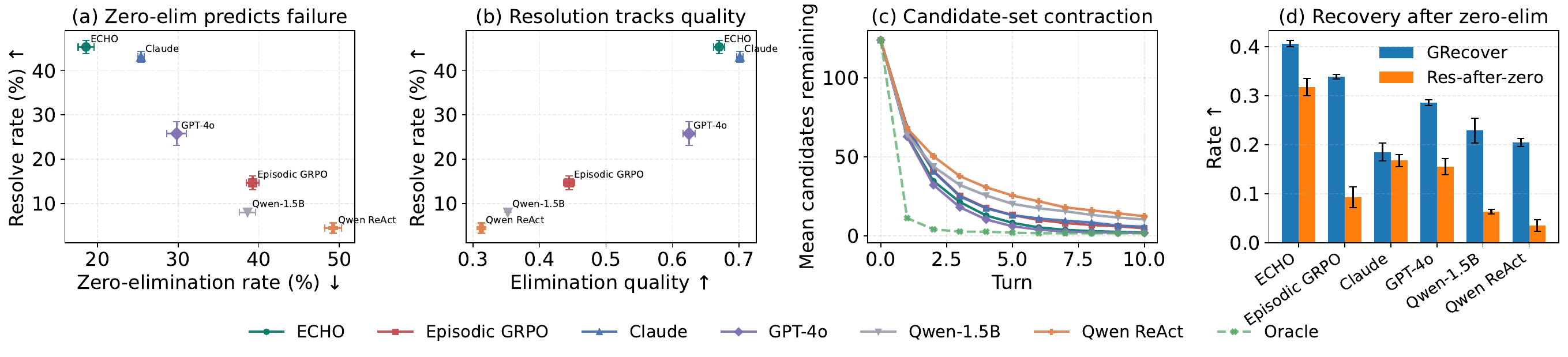}
    \vspace*{-3mm}
    \caption{
    \textbf{Behavioral diagnostics in CSG}. \textsc{ECHO} succeeds by reducing zero-elimination turns, maintaining high-quality belief contraction, shrinking the candidate set faster, and recovering more robustly after epistemic failures. Results on question type are shown in \Cref{tab:csg_hint_distribution_all} in Appendix~\ref{app:binary_search_echo}.}
    \vspace*{-3mm} 
    \label{fig:csg_behavior_diagnostics}
\end{figure*}

\begin{figure}[t]
    \centering

    \begin{subfigure}[t]{0.235\linewidth}
        \centering
        \includegraphics[width=\linewidth]{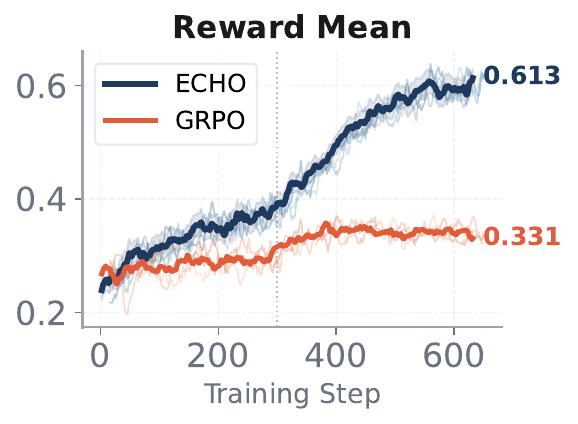}
      
        \label{fig:csg_reward}
    \end{subfigure}
    \hfill
    \begin{subfigure}[t]{0.235\linewidth}
        \centering
        \includegraphics[width=\linewidth]{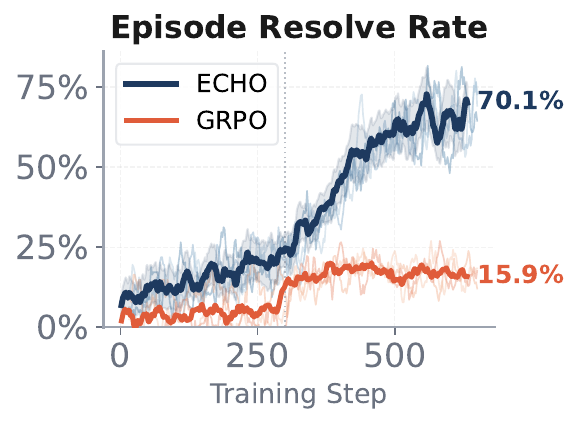}
 
        \label{fig:csg_episode_resolve_rate}
    \end{subfigure}
    \hfill
    \begin{subfigure}[t]{0.235\linewidth}
        \centering
        \includegraphics[width=\linewidth]{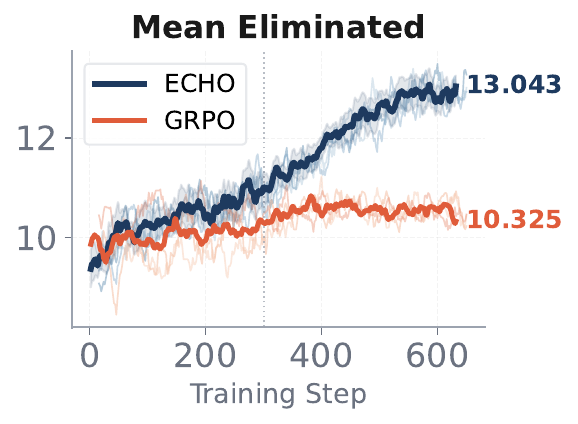}
 
        \label{fig:csg_redundancy_rate}
    \end{subfigure}
    \hfill
    \begin{subfigure}[t]{0.235\linewidth}
        \centering
        \includegraphics[width=\linewidth]{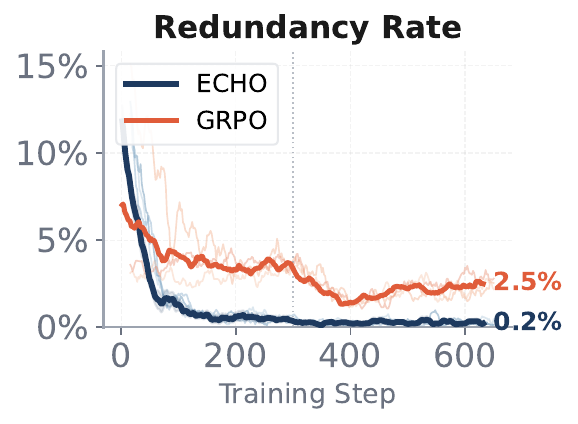}
 
        \label{fig:csg_arm_match_rate}
    \end{subfigure}

    \vspace*{-3mm} 
    \caption{Performance and behavioral trends during \textbf{RL training} in the Clue Selector Game (CSG) environment. Plots show metrics SEM across 3 independent training runs. Additional training dynamics are shown in Fig.~\ref{fig:clue_steps_additional_metrics} in Appendix~\ref{app:training_dynamics}.}
    \vspace*{-3mm} 
    \label{fig:clue_steps_four_panel_training_main}
\end{figure}

\paragraph{What epistemically adaptive behavior looks like.}
The qualitative trajectory in Fig.~\ref{fig:clue_overview} illustrates the same mechanism at the episode level. For the same secret, \textsc{ECHO} asks questions that remain useful under the changing candidate set, progressively contracting the belief state until only the target remains, while episodic-return GRPO asks questions that are initially plausible, but then non-informative, producing several zero-elimination turns. Thus, the failure mode is \emph{stale information seeking}: otherwise reasonable questions that are no longer informative after earlier observations.

\vspace*{-3mm}
\paragraph{Explicit reasoning does not substitute for epistemic training.}
Explicit reasoning traces alone do not produce epistemic adaptability in long-horizon EDPs. Full ReAct forces models to reason on nearly every turn, yet GPT-4o + Full ReAct resolves only \(25.3\%\) of episodes. Qwen2.5-1.5B + Full ReAct resolves only \(4.4\%\). By contrast, \textsc{ECHO} resolves \(45.3\%\) of episodes and produces explicit reasoning text on only \(0.9\%\) of turns. This suggests that the central bottleneck for epistemic adaptivity is \textit{not} verbalizing intermediate reasoning, but learning a policy conditioned on the evolving belief state. This is “silent exploration” (Sec.~\ref{sec:edp_theory}) manifest: adaptive information seeking expressed through query choices and posterior-sensitive feedback rather than long test-time reasoning traces. \textsc{ECHO} trains adaptable models using belief-sensitive reward structures and provides strong performance with higher token cost-efficiency—a trend consistent with prior works~\citep{ma2025reasoningmodelseffectivethinking} in LLM reasoning.

\vspace*{-3mm}
\paragraph{Training dynamics support turn-level epistemic credit assignment.}
Fig.~\ref{fig:clue_steps_four_panel_training_main} 
shows that \textsc{ECHO}'s gains emerge as a turn-level epistemic learning effect rather than as a generic reward increase. By the end of training, \textsc{ECHO} reaches much higher episode resolution than trajectory-level GRPO (\(70.1\%\) vs.\ \(15.9\%\)), eliminates more candidates per turn (\(13.04\) vs.\ \(10.33\)), and reduces redundancy almost completely (\(0.2\%\) vs.\ \(2.5\%\)). It also resolves more often at the turn level (\(9.3\%\) vs.\ \(1.8\%\)) and shortens average episodes (\(7.67\) vs.\ \(9.29\) turns). These trends match our theoretical analysis of EDPs (Sec.~\ref{sec:edp_theory}): when credit is assigned at the decision depth where an action changes the posterior, the policy learns to ask questions that contract the current candidate set rather than merely optimize a sparse final outcome in a multiturn task. The optimization diagnostics suggest specialization rather than mode collapse~\citep{janus2022mysteries,zhang2025verbalized}: \textsc{ECHO}'s entropy decreases while query diversity remains near saturation, indicating that
the policy becomes more selective without repeating the same questions. In contrast, GRPO maintains higher entropy but achieves lower information gain and far lower resolution, consistent with trajectory-level credit failing to identify which intermediate actions were useful under the current belief. Overall, these results indicate that \textbf{aggregate trajectory returns can
fail to identify the per-turn Bayesian advantage needed for epistemic credit.}

\vspace*{-3mm}

\section{Conclusion}
\label{sec:conclusion}
\vspace*{-3mm}
We introduced Epistemic Decision Processes (EDPs), a belief-state formulation of multi-turn information seeking, and showed theoretically why trajectory-level success is insufficient for epistemic credit assignment: policies must receive credit at the decision points where actions change the posterior. Building on this view, we proposed \textsc{ECHO}, a turn-level optimization framework that rewards posterior-sensitive information gain and trains policies to be \textit{epistemically adaptive}. In the Clue Selector Game, \textsc{ECHO} trains a 1.5B policy that exceeds strong proprietary frontier baselines such as Claude Sonnet 4.6 and reasoning-observation focused baselines like ReAct on final task resolution, while also reducing zero-elimination turns, maintaining high-quality belief contraction, and recovering more robustly after local epistemic failures. These gains do not come from longer visible reasoning traces: \textsc{ECHO} exhibits ``silent exploration,'' learning adaptive information seeking through belief-conditioned actions rather than explicit rationales at inference time.

Overall, our results suggest that adaptive language agents in information-seeking long-horizon tasks should be trained and evaluated not only by whether they eventually succeed, but by how they gather evidence, update beliefs, and recover from uncertainty over time. Extending EDP-style credit assignment to richer tool-use, search, and multi-agent settings is a natural next step for future work. 

\vspace*{-3mm}
\paragraph{Limitations}
 
CSG provides exact Bayesian posteriors by construction
(Lemma~\ref{lemma:exact_posterior}), but real information-seeking
tasks---medical diagnosis, web search, preference
elicitation---involve implicit, high-dimensional hypothesis spaces
where the belief state must be
approximated~\citep{choudhury2025bed,choudhury2025processrewardmodelsllm,lee2025prints}, and where information
sources may be noisy or
adversarial. In such
settings, the posterior-sensitive turn reward that drives
\textsc{ECHO} could be challenging to compute analytically. 
Our experiments demonstrate \textsc{ECHO} at the 1.5B
scale; whether epistemic credit remains necessary at larger scales
where models may have stronger in-context belief
tracking~\citep{kossen2024incontext}, or provides complementary
gains, is an open question. Similarly, CSG operates over a
relatively short horizon (e.g., 10 total turns) with a structured arm--question action
space; scaling epistemic credit assignment to longer horizons with relatively
open-ended actions including general-purpose tool calls and richer observation
spaces~\citep{abdulhai2023lmrl,jinsearch} like the web is unexplored.
Extending EDPs to settings with approximate beliefs, unreliable
information sources, and compositional latent structure is a
natural direction for future work.





\bibliography{merged}
\appendix

\section{Theory and Proofs}
\label{app:proofs}
In this section, we provide our main theoretical results and proofs. 
\begin{proposition}[Belief-state sufficiency]
\label{prop:belief-state-sufficiency}
Let \(\mathcal{E}=(\mathcal{Z},\rho_0,\mathcal{A},\Omega,O,u,T)\) be an Epistemic Decision Process. Let \(h_t=(a_0,o_0,\ldots,a_{t-1},o_{t-1})\) denote the interaction history, and let \(b_t(z)=\mathbb{P}(z^\star=z\mid h_t)\) be the posterior belief over the latent task variable. Suppose there exists a surface-state map \(s:\mathcal{H}\rightarrow\mathcal{S}\) such that the following hold:
\begin{enumerate}
    \item \textbf{Observation sufficiency:} the observation kernel depends on the history only through the latent variable, surface state, and action:
    \[
        O(o_t\mid z^\star,h_t,a_t)=O(o_t\mid z^\star,s(h_t),a_t).
    \]
    \item \textbf{Surface-state update sufficiency:} there exists a transition map \(F:\mathcal{S}\times\mathcal{A}\times\Omega\rightarrow\mathcal{S}\) such that
    \[
        s(h_{t+1})=F(s(h_t),a_t,o_t).
    \]
    \item \textbf{Utility sufficiency:} the per-turn utility can be written as a measurable function of the surface state, belief, action, and observation:
    \[
        u_t=u(s(h_t),b_t,a_t,o_t).
    \]
    The terminal reward can likewise be written as \(R_T=R_T(s(h_T),b_T)\), or equivalently as the posterior expectation of a latent terminal reward.
\end{enumerate}
Then there exists an optimal policy $\pi^\star$ for the EDP that depends on the history \(h_t\) only through the augmented epistemic state \((s(h_t),b_t,t)\).
\end{proposition}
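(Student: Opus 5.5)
The plan is to reduce the EDP to a finite-horizon fully observed MDP on the augmented state \(x_t=(s(h_t),b_t,t)\) and then apply backward induction (dynamic programming). Once the reduction holds, the Markov-optimality statement follows from the standard finite-horizon result. The work therefore lies in verifying three things: that \(x_t\) evolves as a controlled Markov chain, that the expected rewards are functions of \((x_t,a_t)\), and that nothing is lost by restricting to policies on \(x_t\).

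\textbf{Step 1 (predictive distribution).} Conditional on \(h_t\) and \(a_t\), the law of the next observation is
\[
\mathbb{P}(o_t\mid h_t,a_t)=\sum_{z}b_t(z)\,O(o_t\mid z,s(h_t),a_t)=:\bar O(o_t\mid s(h_t),b_t,a_t).
\]
This uses observation sufficiency, together with the fact that \(z^\star\) is independent of \(a_t\) given \(h_t\), since the policy sees only \(h_t\).

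\textbf{Step 2 (state update).} Bayes' rule gives
\[
b_{t+1}(z)\propto b_t(z)\,O(o_t\mid z,s(h_t),a_t),
\]
so \(b_{t+1}=\tau(s(h_t),b_t,a_t,o_t)\) depends on \(h_t\) only through \((s(h_t),b_t)\). By surface-state update sufficiency, \(s(h_{t+1})=F(s(h_t),a_t,o_t)\). Hence \(x_{t+1}\) is a deterministic function of \((x_t,a_t,o_t)\), and \(o_t\sim\bar O(\cdot\mid x_t,a_t)\). This makes \(x_t\) a controlled Markov process under any history-dependent policy.

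\textbf{Step 3 (rewards).} By utility sufficiency, the conditional expected per-turn reward is
\[
\bar u(x_t,a_t)=\sum_{o}\bar O(o\mid x_t,a_t)\,u(s(h_t),b_t,a_t,o).
\]
The terminal reward is \(R_T(s(h_T),b_T)\). If \(R_T\) is instead a latent reward \(R_T(h_T,z)\), replace it by its posterior expectation \(\mathbb{E}_{z\sim b_T}[R_T]\), which by the tower property leaves every policy's expected return unchanged. The Bayesian return \(\mathbb{E}_{z\sim\rho_0}\mathbb{E}_{\pi,O}[\cdot]\) then equals the expected return of the reduced MDP under the same policy.

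\textbf{Step 4 (backward induction).} Define \(V_T(x)=R_T(s,b)\) and
\[
V_t(x)=\sup_a\Big[\bar u(x,a)+\sum_o\bar O(o\mid x,a)\,V_{t+1}\big(F(s,a,o),\tau(s,b,a,o),t+1\big)\Big].
\]
Show by induction on \(t\) from \(T\) downward that for every history-dependent policy \(\pi\) and every \(h_t\), the continuation value \(Q^\pi(h_t)\le V_t(x(h_t))\), with equality for the greedy Markov policy \(\pi^\star(x)\in\arg\max\). The inductive step conditions on \(h_t\), applies Steps 1–3 to rewrite the one-step expectation in terms of \(x_t\), and invokes the induction hypothesis at \(t+1\).

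\textbf{Main obstacle.} I expect the delicate part to be attainment of the supremum and measurability of the argmax selector when \(\mathcal{A}\) or \(\Omega\) is infinite; open-ended language actions make this a real concern. I will first treat finite \(\mathcal{A}\), \(\Omega\) and \(\mathcal{Z}\), which covers CSG, where the max exists and no measurability issues arise. For the general case I will impose a standard compactness/continuity or measurable-selection assumption. Failing that, I will settle for \(\epsilon\)-optimal Markov policies, choosing \(\epsilon/T\)-greedy actions at each stage.
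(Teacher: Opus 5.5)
Your proposal is correct and follows essentially the same route as the paper. Both reduce the EDP to a belief-state MDP on $(s(h_t),b_t,t)$ by checking that the predictive observation law, the Bayes update, the surface-state update and the utility depend on the history only through that augmented state, and both then take the greedy policy from the backward-induction value functions. Your explicit inductive bound $Q^\pi(h_t)\le V_t(x(h_t))$ against arbitrary history-dependent policies, and your check that the maximum is attained, make rigorous two points the paper's proof takes for granted when it writes $\max$ and $\arg\max$.
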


\begin{proof}
We prove the result by reducing the EDP to a belief-state MDP~\citep{aastrom1965optimal, smallwood1973optimal}.

\paragraph{Step 1: The augmented epistemic state determines future observation distributions.}
Fix any history \(h_t\), action \(a_t\), surface state \(s_t=s(h_t)\), and belief \(b_t\). Since \(b_t(z)=\mathbb{P}(z^\star=z\mid h_t)\), the predictive distribution over the next observation is
\[
    \mathbb{P}(o_t\mid h_t,a_t)
    =
    \int_{\mathcal{Z}}
    O(o_t\mid z,s_t,a_t)b_t(z)\,dz.
\]
Thus, by the observation sufficiency assumption, the distribution of \(o_t\) depends on the full history \(h_t\) only through \((s_t,b_t)\) and the chosen action \(a_t\).

\paragraph{Step 2: The belief update is determined by the augmented epistemic state.}
After observing \(o_t\), the posterior is updated by Bayes' rule:
\[
    b_{t+1}(z)
    =
    \tau(b_t,a_t,o_t)(z)
    =
    \frac{O(o_t\mid z,s_t,a_t)b_t(z)}
    {\int_{\mathcal{Z}}O(o_t\mid z',s_t,a_t)b_t(z')\,dz'}.
\]
Therefore, \(b_{t+1}\) is fully determined by \((s_t,b_t,a_t,o_t)\). By the surface-state update assumption, \(s_{t+1}=s(h_{t+1})=F(s_t,a_t,o_t)\). Hence the next augmented epistemic state \((s_{t+1},b_{t+1},t+1)\) depends on the history only through \((s_t,b_t,t)\), the action \(a_t\), and the observation \(o_t\).

\paragraph{Step 3: The reward is determined by the augmented epistemic state.}
By the utility sufficiency assumption, the per-turn utility is \(u(s_t,b_t,a_t,o_t)\). Thus the distribution of both the next augmented state and the immediate utility is determined by \((s_t,b_t,t)\) and \(a_t\). No additional information from the raw history \(h_t\) is needed for predicting future rewards.

\paragraph{Step 4: Bellman recursion on the belief-state MDP.}
Define the optimal value function on the augmented epistemic state by backward induction~\citep{sutton2018reinforcement}. At the terminal step,
\[
    V_T^\star(s,b)=R_T(s,b).
\]
For \(t<T\), define
\[
    V_t^\star(s,b)
    =
    \max_{a\in\mathcal{A}}
    \mathbb{E}_{z\sim b}
    \mathbb{E}_{o\sim O(\cdot\mid z,s,a)}
    \left[
        u(s,b,a,o)
        +
        V_{t+1}^\star(F(s,a,o),\tau(b,a,o))
    \right].
\]
This Bellman recursion is well-defined because the predictive observation distribution, the belief update, the surface-state update, and the utility all depend only on \((s,b,t)\) and \(a\).

\paragraph{Step 5: Constructing an optimal policy.}
For any history \(h_t\), let \(s_t=s(h_t)\) and \(b_t=\mathbb{P}(z^\star\mid h_t)\). Define a greedy policy
\[
    \pi^\star(h_t)
    \in
    \arg\max_{a\in\mathcal{A}}
    \mathbb{E}_{z\sim b_t}
    \mathbb{E}_{o\sim O(\cdot\mid z,s_t,a)}
    \left[
        u(s_t,b_t,a,o)
        +
        V_{t+1}^\star(F(s_t,a,o),\tau(b_t,a,o))
    \right].
\]
By construction, \(\pi^\star(h_t)\) depends on \(h_t\) only through \((s(h_t),b_t,t)\). Moreover, by the Bellman recursion~\citep{sutton2018reinforcement}, this policy attains \(V_t^\star(s(h_t),b_t)\) for every history \(h_t\). Therefore, \(\pi^\star\) is optimal, and an optimal policy exists that is Markovian~\citep{zhang2025markovianreflectiveexplorationbayesadaptive} in the augmented epistemic state \((s(h_t),b_t,t)\).
\end{proof}

\paragraph{Significance.}
Proposition~\ref{prop:belief-state-sufficiency} justifies the EDP formulation. Although the agent observes a growing text history \(h_t\), the full history is not the fundamental decision state. What matters for optimal action selection is the visible surface state together with the posterior belief induced by the history. This separates ordinary text-state conditioning from epistemic adaptivity: a policy that ignores \(b_t\) may condition on the prompt and turn index, but it cannot in general choose actions based on what the agent has learned from external observations. The result is the EDP analogue of the standard belief-MDP reduction in Bayesian RL and POMDPs, but applied to multi-turn LLM agents whose observations come from an external environment rather than from internal chain-of-thought continuation.

\paragraph{Connection to Bayesian RL for LLM reasoning.}
This proposition parallels the belief-sufficiency argument used in Bayesian RL accounts of LLM reasoning. In BARL~\citep{zhang2025markovianreflectiveexplorationbayesadaptive}, the hidden uncertainty is over candidate MDPs or reasoning hypotheses, and the belief is updated over those hypotheses inside a generated reasoning trace. In our EDP setting, the hidden uncertainty is over an external latent task variable \(z^\star\), and the belief \(b_t(z)=\mathbb{P}(z^\star=z\mid h_t)\) is updated through external observations. Thus, the proof technique is the same belief-MDP reduction, but our setting is explicitly multi-turn: the agent acts, receives evidence from the environment, updates its posterior, and then chooses the next action.

\begin{theorem}[Adaptive policies can strictly outperform belief-agnostic policies]
\label{thm:adaptive-gap}
For every depth $d \geq 1$, there exists an Epistemic Decision Process $\mathcal{E}_d$ 
with horizon $d$ such that:
\begin{enumerate}
    \item there exists an epistemically adaptive policy $\pi^\star$ with 
          $\mathbb{P}(\mathrm{success} \mid \pi^\star) = 1$;
    \item every belief-agnostic policy $\mu$ satisfies 
          $\mathbb{P}(\mathrm{success} \mid \mu) \leq 2^{1-d}$.
\end{enumerate}
\end{theorem}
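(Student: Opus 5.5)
We will construct \(\mathcal{E}_d\) as an explicit hidden-bit guessing game in which the surface state carries no information about the latent variable, so that a belief-agnostic policy behaves like a fixed sequence of (possibly randomized) guesses, while an adaptive policy can read off the bit it needs from its posterior.

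\textbf{Construction.} Let the latent variable be \(z^\star=(z_1,\ldots,z_d)\in\{0,1\}^d\) with uniform prior \(\rho_0\). Take actions \(\mathcal{A}=\{0,1\}\), horizon \(T=d\), and surface state \(s(h_t)=t\), which encodes only the turn index and is identical across all histories of the same length. At turn \(t\) (indexing turns \(t=0,\ldots,d-1\)) the observation is \(o_t=z_{t+2}\) if \(t+2\le d\), together with an indicator of whether \(a_t=z_{t+1}\). At turn \(0\) we additionally reveal \(z_1\) in a pre-action observation; equivalently, we let the initial history contain \(o_{-1}=z_1\). Success means \(a_t=z_{t+1}\) for every \(t\).

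This construction satisfies the sufficiency assumptions of Proposition~\ref{prop:belief-state-sufficiency}: the observation kernel depends only on \(z^\star\) and \(t\), and \(F(t,a,o)=t+1\).

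\textbf{Adaptive policy.} Under this construction the posterior \(b_t\) places probability one on the value of \(z_{t+1}\) revealed by the previous observation. The policy \(\pi^\star(h_t)=\arg\max_{a} b_t(z_{t+1}=a)\) is a function of \((s,b_t,t)\) and succeeds with probability \(1\). It is strictly adaptive, because histories with the same \(t\) but different revealed bits induce different beliefs and different actions.

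\textbf{Belief-agnostic bound.} A belief-agnostic \(\mu\) has \(\mu(a_t\mid h_t)=g(a_t\mid t)\). Its actions \(a_0,\ldots,a_{d-1}\) are therefore independent draws from fixed distributions \(g(\cdot\mid t)\), and they are independent of \(z^\star\). Hence
\[
\mathbb{P}(\mathrm{success})=\prod_{t}\sum_{a}g(a\mid t)\,\mathbb{P}(z_{t+1}=a)=2^{-d}\le 2^{1-d}.
\]
The stated bound \(2^{1-d}\) leaves slack. If we instead wanted the first bit to be guessable without revelation, so that one turn is free, the same calculation gives exactly \(2^{1-d}\). We will likely present that variant, with \(z_1\) unrevealed and only \(d-1\) bits needing belief, in order to match the constant. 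In that variant the adaptive policy would no longer succeed with probability \(1\) unless the first bit is also revealed. We therefore keep the pre-action revelation and note that \(2^{-d}\le 2^{1-d}\).

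\textbf{Main obstacle.} The delicate step is definitional. The bound depends on \(s(h_t)\) being genuinely belief-independent, here only \(t\). If \(s\) were allowed to include past observations, a "belief-agnostic" policy could copy the revealed bit from the surface text and the gap would vanish. We will therefore state explicitly that \(s(h_t)=t\) (together with the fixed action schema), and justify this by the paper's definition of the interface state as encoding the turn index, schema and budget but not the posterior.

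With this convention, the independence of the actions from \(z^\star\) under \(\mu\) is immediate. That independence holds even for randomized \(g\), because the randomness of \(g\) is independent of \(z^\star\), and the product bound follows.
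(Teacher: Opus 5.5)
\textbf{Gap: the pre-action observation takes you outside the EDP formalism.} Your argument is otherwise the paper's: a uniform bit string as $z^\star$, surface state $s(h_t)=t$, an adaptive policy that copies the most recently revealed bit, and a belief-agnostic policy whose actions form a product distribution independent of $z^\star$, so each belief-requiring turn succeeds with probability exactly $1/2$. Your independence argument is slightly cleaner than the paper's conditioning on success up to $t-1$.

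The step that does not go through is $o_{-1}=z_1$. In an EDP the history is $h_t=(a_0,o_0,\ldots,a_{t-1},o_{t-1})$, so $h_0$ is empty and $b_0=\rho_0$. At turn $0$ every policy, adaptive or not, faces the same fixed belief, and there is no slot for an observation before $a_0$. Adding $o_{-1}$ is not ``equivalent'': it is an extra turn, so you have actually built an instance of horizon $d+1$.

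Your bound $2^{-d}$ is the symptom. With binary actions and $s(h_t)=t$, no horizon-$d$ EDP with a surely-successful $\pi^\star$ can push every belief-agnostic policy below $2^{1-d}$. Take the belief-agnostic policy that plays a fixed action from the support of $\pi^\star(\cdot\mid h_0)$ at turn $0$ and a fair coin afterwards. It stays inside the support of $\pi^\star$ with probability at least $2^{-(d-1)}$. Every such trajectory also has positive probability under $\pi^\star$, and so succeeds.

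\textbf{Repair.} Use the paper's device, not the variant you sketch with $z_1$ left unrevealed: spend turn $0$ on revelation.
\begin{itemize}
\item At turn $0$, let $o_0=z_1$, with no matching requirement on $a_0$. The paper instead makes $a_0=0$ the revealing action.
\item For $t\ge 1$, let $o_t=z_{t+1}$.
\item Define success as $a_t=z_t$ for $t=1,\ldots,d-1$.
\end{itemize}
Then $\pi^\star$ still succeeds with probability $1$, and your product computation over the $d-1$ guessing turns gives exactly $2^{1-d}$. This is why the theorem's constant is $2^{1-d}$ rather than $2^{-d}$.

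The paper additionally gates each reveal on the previous match ($o_t=z_{t+1}$ if $a_t=z_t$, and $\bot$ otherwise) and defines success as having observed all $d$ bits. That gating is inessential. Your unconditional reveals with a matching-based terminal reward work equally well once turn $0$ is handled this way.
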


\begin{proof}
We construct a family of EDPs in which success requires using evidence 
revealed in previous turns. The construction separates policies that 
condition on the epistemic state from policies that do not.

\paragraph{Step 1: Construction of $\mathcal{E}_d$.}
Let the latent task variable be a binary string 
$z^\star = (z^\star_1, \ldots, z^\star_d) \in \mathcal{Z} = \{0,1\}^d$, 
sampled uniformly from $\rho_0$. The action space is $\mathcal{A} = \{0,1\}$, 
and the observation space is $\Omega = \{0,1,\bot\}$. The surface state 
is only the turn index, $s(h_t) = t$, so a belief-agnostic policy may 
know the depth but not the posterior induced by observed evidence.

At turn $t = 0$, action $a_0 = 0$ always reveals the first latent bit: 
$o_0 = z^\star_1$. Action $a_0 = 1$ returns $o_0 = \bot$.
For turns $t \geq 1$, the environment reveals the \emph{next} latent bit 
if and only if the agent chooses the \emph{previously revealed} bit:
\[
    o_t =
    \begin{cases}
        z^\star_{t+1}, & \text{if } a_t = z^\star_t, \\
        \bot,          & \text{otherwise.}
    \end{cases}
\]
The terminal reward is $R_T = 1$ if the agent has observed all $d$ latent 
bits by the end of the episode, and $R_T = 0$ otherwise. Since at most one 
bit is revealed per turn, any $\bot$ observation prevents success.

The key design principle is that the correct action at turn $t$ is 
determined by the observation $o_{t-1} = z^\star_t$ received at the 
previous turn. This observation is part of the history $h_t$ and therefore 
encoded in the belief $b_t$, but it is not a function of the surface state 
$s(h_t) = t$ alone.

\paragraph{Step 2: An epistemically adaptive policy succeeds with 
probability one.}
Define $\pi^\star$ as follows. At $t = 0$, choose $a_0 = 0$, which reveals 
$z^\star_1$. At each later turn $t \geq 1$, choose $a_t = z^\star_t$, 
the bit revealed as observation $o_{t-1}$ at the previous turn.

This action is accessible to $\pi^\star$: the value $z^\star_t = o_{t-1}$ 
is contained in the history $h_t = (a_0, o_0, \ldots, a_{t-1}, o_{t-1})$ 
and hence in the posterior belief $b_t$. An epistemically adaptive policy 
can therefore compute $a_t = z^\star_t$ from $b_t$.

By induction, after turn $t$, the agent has observed 
$z^\star_1, \ldots, z^\star_{t+1}$. After $d - 1$ turns of this form, 
the agent has observed the full latent string $z^\star$, so 
$\mathbb{P}(\mathrm{success} \mid \pi^\star) = 1$.

\paragraph{Step 3: Belief-agnostic policies cannot use the revealed bit.}
Let $\mu$ be any belief-agnostic policy. Since $s(h_t) = t$, the policy 
conditions only on the turn index: 
$\mu(a_t \mid h_t) = \mu_t(a_t)$ for some distribution 
$\mu_t \in \Delta(\{0,1\})$, independent of the observations received.

At $t = 0$, the best belief-agnostic policy chooses $a_0 = 0$, so we 
upper bound its probability of passing the first turn by $1$.

For any later turn $t \geq 1$, success requires $a_t = z^\star_t$. 
The value $z^\star_t$ was revealed as observation $o_{t-1}$, but $\mu$ 
does not condition on $o_{t-1}$. Conditional on success at all previous 
turns---which is required for $z^\star_t$ to have been revealed---the 
revealed bit $z^\star_t$ is uniformly distributed on $\{0,1\}$ by symmetry 
of the prior. Therefore, for any fixed action distribution $\mu_t$:
\[
    \mathbb{P}(a_t = z^\star_t \mid \mathrm{success\ up\ to\ } t-1,\, \mu)
    =
    \sum_{a \in \{0,1\}} \mu_t(a)\,\mathbb{P}(z^\star_t = a)
    =
    \frac{\mu_t(0) + \mu_t(1)}{2}
    = \frac{1}{2}.
\]
The equality holds regardless of the mixing weights $\mu_t(0), \mu_t(1)$ 
because the bit $z^\star_t$ is symmetric.

\paragraph{Step 4: Chaining the per-turn bounds.}
For the terminal reward $R_T = 1$, the agent must choose the correct 
action at every turn $t = 1, \ldots, d-1$ (turn $t = 0$ succeeds 
with probability $1$ under the best belief-agnostic policy). The 
events $\{a_t = z^\star_t\}$ are conditionally independent across turns 
given the latent string $z^\star$, because $\mu_t$ is a fixed distribution 
and $z^\star_1, \ldots, z^\star_d$ are independent bits. Therefore:
\[
    \mathbb{P}(\mathrm{success} \mid \mu)
    \leq
    1 \cdot \prod_{t=1}^{d-1} \frac{1}{2}
    =
    2^{1-d}.
\]
Combined with Part~1, this gives $\mathbb{P}(\mathrm{success} \mid \pi^\star) 
= 1$ versus $\mathbb{P}(\mathrm{success} \mid \mu) \leq 2^{1-d}$ for any 
belief-agnostic $\mu$, establishing the claimed separation.

\paragraph{Remark.}
This construction is related to the binary tree construction in 
\citep{zhang2025markovianreflectiveexplorationbayesadaptive} (Theorem~4.3), 
where an epistemically adaptive policy succeeds with probability~$1$ in a 
depth-$d$ tree while any Markovian policy succeeds with probability 
$1/2^{d-1}$. The key difference is the proof technique: their argument 
uses a visit-frequency induction on tree nodes under Markovian policies, 
while our argument uses the direct symmetry of the per-turn bit under 
belief-agnostic action selection. Both constructions achieve an 
exponential gap of $2^{\Omega(d)}$, but ours is more direct for 
the EDP setting because it does not require the full tree structure—only a sequential bit-reveal chain.
\end{proof}



\begin{theorem}[Spectral aliasing of trajectory-level credit]
\label{thm:spectral-aliasing}
Fix a policy \(\pi\) in an EDP and a turn \(t\). Let \(\omega\) denote a full sampled trajectory, let \(G(\omega)=\sum_{s=0}^{T-1}r_s(\omega)\) be its total return, and let
\[
    A(\omega)
    =
    A_{\mathrm{B}}^\pi(h_t(\omega),a_t(\omega))
\]
be the Bayesian advantage of the action taken at turn \(t\). Consider the class of trajectory-level estimators \(\widehat A_t^{\mathrm{traj}}=\phi(G)\), where \(\phi\) is any measurable function of the total return.

Let \(\mathcal{H}=L^2(\Omega,\mathbb{P}_\pi)\) and let \(\mathcal{H}_G=\{\phi(G):\phi\in L^2\}\) be the closed subspace of return-measurable functions. Let \(\Pi_G:\mathcal{H}\rightarrow \mathcal{H}_G\) be the orthogonal projection onto this subspace, i.e. \(\Pi_G A=\mathbb{E}[A\mid G]\). Then
\[
    \inf_{\phi}
    \mathbb{E}\left[
        \left(
        \phi(G)-A
        \right)^2
    \right]
    =
    \left\|
        (I-\Pi_G)A
    \right\|_2^2
    =
    \mathbb{E}\left[
        \operatorname{Var}(A\mid G)
    \right].
\]
Consequently, if \(A_{\mathrm{B}}^\pi(h_t,a_t)\) is not measurable with respect to the total return \(G\), then any trajectory-level estimator has strictly positive irreducible statewise error.
\end{theorem}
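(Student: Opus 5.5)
The plan is to treat the statement as an instance of the standard fact that conditional expectation is the \(L^2\)-orthogonal projection onto the \(\sigma(G)\)-measurable subspace. I would first check the integrability assumption needed to make the statement meaningful: \(A\in L^2(\Omega,\mathbb{P}_\pi)\). This holds if rewards are bounded, which is the case in the finite-horizon setting with bounded \(u\) and \(R_T\). I would state this as a standing assumption, because without it the infimum may be \(+\infty\) and the identity degenerates. With \(A\in L^2\), the space \(\mathcal{H}_G=\{\phi(G)\}\) is a closed subspace of \(\mathcal{H}\), since it equals \(L^2(\Omega,\sigma(G),\mathbb{P}_\pi)\) by the Doob--Dynkin lemma. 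Hence \(\Pi_G\) is well defined and coincides with \(\mathbb{E}[\cdot\mid G]\). That identification is the one step requiring care: I would verify that \(\mathbb{E}[A\mid G]\) lies in \(\mathcal{H}_G\) and that \(A-\mathbb{E}[A\mid G]\perp \psi(G)\) for every bounded, hence every \(L^2\), measurable \(\psi\), using the defining property \(\mathbb{E}[(A-\mathbb{E}[A\mid G])\psi(G)]=0\).

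Next comes the Pythagorean decomposition. For any \(\phi(G)\in\mathcal{H}_G\), write \(\phi(G)-A=(\phi(G)-\Pi_GA)-(I-\Pi_G)A\). The first term lies in \(\mathcal{H}_G\) and the second is orthogonal to it, so
\[
\mathbb{E}[(\phi(G)-A)^2]=\|\phi(G)-\Pi_GA\|_2^2+\|(I-\Pi_G)A\|_2^2 .
\]
The infimum is therefore attained at \(\phi(G)=\mathbb{E}[A\mid G]\) and equals \(\|(I-\Pi_G)A\|_2^2\). Estimators \(\phi(G)\notin L^2\) give infinite error, so they do not affect the infimum. Expanding \(\mathbb{E}[(A-\mathbb{E}[A\mid G])^2]\) through the tower property gives \(\mathbb{E}[\mathbb{E}[(A-\mathbb{E}[A\mid G])^2\mid G]]=\mathbb{E}[\operatorname{Var}(A\mid G)]\), which is the third expression.

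For the consequence, I would argue by contrapositive. If \(\mathbb{E}[\operatorname{Var}(A\mid G)]=0\), then \(A=\mathbb{E}[A\mid G]\) almost surely. That makes \(A\) a.s. equal to a \(\sigma(G)\)-measurable function, so \(A\) is measurable with respect to \(G\) up to null sets. The non-measurability hypothesis has to be read modulo null sets, as in the theorem's condition that the same return arises from belief-distinct decisions \emph{with positive probability}. Under that reading, the error is strictly positive. I expect this measure-theoretic phrasing, rather than any computation, to be the only real obstacle.
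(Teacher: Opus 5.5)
Your proposal is correct and follows essentially the same route as the paper: identify \(\mathbb{E}[\cdot\mid G]\) with the orthogonal projection \(\Pi_G\) onto \(\mathcal{H}_G\), split \(\phi(G)-A=(\phi(G)-\Pi_G A)-(I-\Pi_G)A\), apply Pythagoras, and read off the residual as \(\mathbb{E}[\operatorname{Var}(A\mid G)]\). You add three pieces of care that the paper glosses over: assuming \(A\in L^2\) (e.g.\ via bounded rewards), invoking Doob--Dynkin for closedness, and reading ``not measurable'' modulo null sets; the last is genuinely needed, since a variable that differs from a \(\sigma(G)\)-measurable one only on a null set has zero residual.
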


\begin{proof}
We view advantage estimation as an approximation problem in the Hilbert space \(\mathcal{H}=L^2(\Omega,\mathbb{P}_\pi)\), equipped with inner product \(\langle f,g\rangle=\mathbb{E}[fg]\). The class of trajectory-level estimators \(\phi(G)\) is exactly the subspace \(\mathcal{H}_G\) of functions measurable with respect to the sigma-algebra generated by \(G\).

The conditional expectation \(\Pi_G A=\mathbb{E}[A\mid G]\) is the orthogonal projection of \(A\) onto \(\mathcal{H}_G\). Therefore, for any \(\phi(G)\in \mathcal{H}_G\), we have the orthogonal decomposition
\[
    \phi(G)-A
    =
    \underbrace{\phi(G)-\Pi_G A}_{\in \mathcal{H}_G}
    -
    \underbrace{(I-\Pi_G)A}_{\perp \mathcal{H}_G}.
\]
By Pythagoras,
\[
    \|\phi(G)-A\|_2^2
    =
    \|\phi(G)-\Pi_G A\|_2^2
    +
    \|(I-\Pi_G)A\|_2^2.
\]
The first term is minimized by choosing \(\phi^\star(G)=\Pi_G A=\mathbb{E}[A\mid G]\), leaving the irreducible residual \(\|(I-\Pi_G)A\|_2^2\). Since conditional expectation gives the best \(L^2\) predictor, this residual is
\[
    \|(I-\Pi_G)A\|_2^2
    =
    \mathbb{E}\left[
        \left(A-\mathbb{E}[A\mid G]\right)^2
    \right]
    =
    \mathbb{E}\left[
        \operatorname{Var}(A\mid G)
    \right].
\]
If \(A\) is not measurable with respect to \(G\), then \(A\neq \Pi_GA\) on a set of positive probability, so \(\|(I-\Pi_G)A\|_2^2>0\). This proves the claim.
\end{proof}

\begin{corollary}[Finite-dimensional view of trajectory-level aliasing]
\label{cor:finite-spectral-aliasing}
Assume the trajectory space is finite, \(\Omega=\{\omega_1,\ldots,\omega_n\}\), with probabilities \(p_i=\mathbb{P}_\pi(\omega_i)>0\). Let \(A_i=A(\omega_i)\), where \(A(\omega)=A_{\mathrm{B}}^\pi(h_t(\omega),a_t(\omega))\), and let \(G_i=G(\omega_i)\) be the total return of trajectory \(\omega_i\). Let \(\Pi_G\) denote the orthogonal projection onto the subspace of functions measurable with respect to \(G\), under the weighted inner product \(\langle f,g\rangle=\sum_{i=1}^n p_i f_i g_i\). Then
\[
    (\Pi_G A)_i
    =
    \mathbb{E}[A\mid G=G_i],
\]
i.e., \(\Pi_G\) replaces each trajectory-level advantage by the average Bayesian advantage among trajectories with the same total return. Moreover, \(\Pi_G\) is self-adjoint and idempotent, so its spectrum is contained in \(\{0,1\}\). The eigenspace with eigenvalue \(1\) consists of functions that are constant on each return level set, while the eigenspace with eigenvalue \(0\) consists of within-level-set contrasts whose conditional mean given \(G\) is zero.

Consequently, the irreducible error of any trajectory-level estimator \(\phi(G)\) is exactly the squared norm of the component of \(A\) lying outside the return-measurable subspace:
\[
    \inf_{\phi}
    \|\phi(G)-A\|_2^2
    =
    \|(I-\Pi_G)A\|_2^2
    =
    \sum_g
    \mathbb{P}(G=g)\operatorname{Var}(A\mid G=g).
\]
Thus, if two positive-probability trajectories have the same total return but require different Bayesian credit at turn \(t\), then trajectory-level estimators cannot recover that within-return difference.
\end{corollary}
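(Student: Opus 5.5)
In the finite setting, every function on \(\Omega\) is a vector in \(\mathbb{R}^n\) with the weighted inner product \(\langle f,g\rangle=\sum_i p_i f_i g_i\). Since all \(p_i>0\), this is a genuine inner product. The plan is to specialize Theorem~\ref{thm:spectral-aliasing} to this Hilbert space, making \(\Pi_G\) explicit as a block-averaging operator.

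First, identify the return-measurable subspace. Let \(g_1,\ldots,g_m\) be the distinct values taken by \(G\), and let \(L_k=\{i:G_i=g_k\}\) be the corresponding level sets. A function is \(\sigma(G)\)-measurable exactly when it is constant on each \(L_k\). Hence \(\mathcal{H}_G=\mathrm{span}\{\mathbf{1}_{L_1},\ldots,\mathbf{1}_{L_m}\}\), and these indicators are mutually orthogonal because the level sets are disjoint.

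Next, compute the projection. Define
\[
(\Pi_G A)_i=\frac{\sum_{j\in L_k}p_jA_j}{\sum_{j\in L_k}p_j}\quad\text{for } i\in L_k,
\]
which is exactly \(\mathbb{E}[A\mid G=G_i]\). To confirm it is the orthogonal projection, check two things: it lies in \(\mathcal{H}_G\), and the residual \(A-\Pi_GA\) is orthogonal to every \(\mathbf{1}_{L_k}\). The second holds because \(\sum_{i\in L_k}p_i(A_i-\bar A_k)=0\), where \(\bar A_k\) is the weighted block mean.

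Idempotence is immediate, since averaging a block-constant vector leaves it unchanged. Self-adjointness follows from the orthogonality of the residual. Equivalently, one can write \(\langle f,\Pi_G g\rangle=\sum_k \frac{(\sum_{L_k}p f)(\sum_{L_k}p g)}{\sum_{L_k}p}\), which is symmetric in \(f\) and \(g\). A self-adjoint idempotent operator satisfies \(\lambda^2=\lambda\) for every eigenvalue, so its spectrum lies in \(\{0,1\}\). The \(1\)-eigenspace is \(\mathrm{range}(\Pi_G)=\mathcal{H}_G\), the block-constant functions. The \(0\)-eigenspace is \(\ker\Pi_G\), the vectors with zero weighted mean on every block, which is exactly \(\mathcal{H}_G^\perp\).

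Then invoke Pythagoras, exactly as in the proof of Theorem~\ref{thm:spectral-aliasing}:
\[
\|\phi(G)-A\|^2=\|\phi(G)-\Pi_GA\|^2+\|(I-\Pi_G)A\|^2,
\]
so the infimum equals \(\|(I-\Pi_G)A\|^2\). Expand this norm block by block:
\[
\sum_k\sum_{i\in L_k}p_i(A_i-\bar A_k)^2=\sum_k \mathbb{P}(G=g_k)\operatorname{Var}(A\mid G=g_k),
\]
using \(\mathbb{P}(G=g_k)=\sum_{i\in L_k}p_i\).

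Finally, suppose \(\omega_i,\omega_j\) lie in the same block \(L_k\) but \(A_i\neq A_j\). Both have positive probability, so the conditional variance on that block is strictly positive, and therefore the infimum is strictly positive.

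There is no real obstacle here. The only care needed is bookkeeping: use the weighted inner product throughout, so that self-adjointness and the variance identity hold with respect to \(\mathbb{P}_\pi\) rather than the uniform measure.
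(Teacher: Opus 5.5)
Your proposal is correct and follows the paper's proof: you identify \(\Pi_G\) with the weighted block average \(\mathbb{E}[\cdot\mid G]\), use self-adjointness and idempotence to get spectrum in \(\{0,1\}\) with range and kernel as the two eigenspaces, then apply the Pythagorean decomposition from Theorem~\ref{thm:spectral-aliasing} and expand the residual level set by level set. The only difference is that you verify the projection formula and self-adjointness directly (residual orthogonal to each level-set indicator, explicit symmetric form of \(\langle f,\Pi_G g\rangle\)) where the paper simply cites that conditional expectation is the orthogonal projection; this makes your version more self-contained but it is the same argument.
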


\begin{proof}
Since \(\Pi_G\) is the conditional expectation operator \(\mathbb{E}[\cdot\mid G]\), it maps each value \(A_i\) to the average of \(A\) over all trajectories with the same total return \(G_i\). Thus \((\Pi_G A)_i=\mathbb{E}[A\mid G=G_i]\).

The operator \(\Pi_G\) is an orthogonal projection in \(L^2(\Omega,\mathbb{P}_\pi)\), so \(\Pi_G^\ast=\Pi_G\) and \(\Pi_G^2=\Pi_G\). Therefore every eigenvalue \(\lambda\) satisfies \(\lambda^2=\lambda\), implying \(\lambda\in\{0,1\}\). The \(1\)-eigenspace is the range of \(\Pi_G\), namely the set of functions measurable with respect to \(G\), which are exactly the functions that are constant on each return level set. The \(0\)-eigenspace is the nullspace of \(\Pi_G\), namely functions \(f\) satisfying \(\mathbb{E}[f\mid G]=0\), which are precisely within-level-set contrasts.

By Theorem~\ref{thm:spectral-aliasing}, the best trajectory-level estimator is \(\Pi_G A=\mathbb{E}[A\mid G]\), and its residual is \((I-\Pi_G)A\). Therefore
\[
    \inf_{\phi}\|\phi(G)-A\|_2^2
    =
    \|(I-\Pi_G)A\|_2^2
    =
    \mathbb{E}[\operatorname{Var}(A\mid G)].
\]
In the finite case, this conditional variance expands as
\[
    \mathbb{E}[\operatorname{Var}(A\mid G)]
    =
    \sum_g
    \mathbb{P}(G=g)\operatorname{Var}(A\mid G=g).
\]
If \(A\) varies among trajectories in the same return level set, then at least one conditional variance term is positive, so the residual is strictly positive.
\end{proof}

\begin{remark}
\medskip\noindent\textbf{Why this matters for epistemic credit.}
Corollary~\ref{cor:finite-spectral-aliasing} makes the limitation of trajectory-level credit explicit. Total return partitions trajectories into return level sets, and any estimator \(\phi(G)\) can only assign credit that is constant within each level set. However, in an EDP, two trajectories can have the same final return while requiring different credit for an intermediate action because the action was taken under a different belief state. This within-return variation is exactly the component \((I-\Pi_G)A\), which lies outside the return-measurable subspace and is therefore invisible to any trajectory-level estimator. The failure is not finite-sample noise or poor optimization; it is an aliasing\footnote{We use the term ``aliasing'' by analogy to signal processing, where distinct high-frequency components can become indistinguishable under a coarse sampling operator~\citep{oppenheim1999discrete,shannon2006communication}. In our setting, the coarse observation is the aggregate return \(G\): trajectories with different belief-conditioned advantages can collapse to the same return level set.} problem caused by projecting belief-dependent credit onto a coarser signal.

\medskip\noindent\textbf{Connection to policy-gradient baselines.}
This result is complementary to classical policy-gradient variance-reduction analyses such as Greensmith et al.~\citep{greensmith2004variance}. Those analyses study how baselines reduce variance when they condition on useful state information. Our result identifies the opposite obstruction in EDPs: if the estimator conditions only on aggregate return, then the projection subspace is too small to recover belief-dependent advantage components. In other words, the issue is not merely choosing a better scalar baseline; the estimator must expose information about the epistemic state.
\end{remark}

\begin{lemma}[Exact posterior in CSG]
\label{lemma:exact_posterior}
Let $z^\star$ be drawn uniformly from $\{1,\ldots,N\}$ and let
$\mathcal{C}_0 = \{1,\ldots,N\}$. Suppose the oracle responds
truthfully at every turn, i.e.,
$p(o_t \mid z^\star\!=\!z,\, a_t) = \mathbf{1}[z \in \mathcal{C}_{t+1}]$,
where $\mathcal{C}_{t+1} \subseteq \mathcal{C}_t$ is the set of
candidates consistent with observation $o_t$. Then for all
$t \geq 0$, the posterior is exactly uniform over the remaining
candidate set:
\[
  b_t(z) \;=\; \mathbb{P}(z^\star = z \mid h_t)
  \;=\;
  \frac{\mathbf{1}[z \in \mathcal{C}_t]}{|\mathcal{C}_t|}.
\]
\end{lemma}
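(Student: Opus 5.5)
The plan is induction on $t$, using Bayes' rule in the form of the belief-update operator $\tau$ from the proof of Proposition~\ref{prop:belief-state-sufficiency}. The structure is simple: a uniform prior, pushed through a likelihood that is an indicator, yields a posterior proportional to the indicator of the surviving set. The only subtlety is normalization, which needs the true secret to remain in the candidate set.

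For the base case $t=0$, the history $h_0$ is empty, so $b_0=\rho_0$ is uniform on $\{1,\ldots,N\}=\mathcal{C}_0$. This is exactly $\mathbf{1}[z\in\mathcal{C}_0]/|\mathcal{C}_0|$.

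For the inductive step, assume $b_t(z)=\mathbf{1}[z\in\mathcal{C}_t]/|\mathcal{C}_t|$. Because the agent's action is a function of $h_t$ and its own randomness, which is independent of $z^\star$, conditioning on $a_t$ does not change the posterior. This gives
\[
  b_{t+1}(z)\propto p(o_t\mid z^\star=z,a_t)\,b_t(z)=\mathbf{1}[z\in\mathcal{C}_{t+1}]\cdot\frac{\mathbf{1}[z\in\mathcal{C}_t]}{|\mathcal{C}_t|}.
\]
Since $\mathcal{C}_{t+1}\subseteq\mathcal{C}_t$, the product of indicators collapses to $\mathbf{1}[z\in\mathcal{C}_{t+1}]/|\mathcal{C}_t|$. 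Summing over $z$ gives the normalizer $|\mathcal{C}_{t+1}|/|\mathcal{C}_t|$, and dividing yields $b_{t+1}(z)=\mathbf{1}[z\in\mathcal{C}_{t+1}]/|\mathcal{C}_{t+1}|$.

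The main obstacle is making sure this normalizer is nonzero, i.e.\ $\mathcal{C}_{t+1}\neq\emptyset$. Here I will use truthfulness. The observation $o_t$ is generated with $z^\star$ as the true secret, so $z^\star$ is consistent with $o_t$, and by induction $z^\star\in\mathcal{C}_t$. Hence $z^\star\in\mathcal{C}_{t+1}$, which makes $|\mathcal{C}_{t+1}|\ge 1$ on every realized history. In other words, every history of positive probability has a nonempty candidate set, so conditioning is well defined.

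I will also note that the lemma's notation $\mathcal{C}_{t+1}$ implicitly depends on $(h_t,a_t,o_t)$. I will read it as the map $z\mapsto\mathbf{1}[z$ consistent with $o_t$ under $a_t]$ intersected with $\mathcal{C}_t$. Read this way, the likelihood is a genuine function of $z$, and the computation above is exactly the Bayes step $\tau$.
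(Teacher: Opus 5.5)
Your proposal is correct and follows essentially the same induction as the paper: a uniform base case, a Bayes update with the indicator likelihood, collapsing $\mathbf{1}[z\in\mathcal{C}_t]\,\mathbf{1}[z\in\mathcal{C}_{t+1}]$ to $\mathbf{1}[z\in\mathcal{C}_{t+1}]$ via $\mathcal{C}_{t+1}\subseteq\mathcal{C}_t$, and normalizing by $|\mathcal{C}_{t+1}|/|\mathcal{C}_t|$. The paper leaves two of your checks implicit: that conditioning on $a_t$ leaves the posterior unchanged, and that truthfulness keeps $z^\star\in\mathcal{C}_{t+1}$ so the normalizer is nonzero; these tighten the argument without changing it.
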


\begin{proof} We prove this by induction. Notice that the base case holds by construction:
$b_0(z) = 1/N = 1/|\mathcal{C}_0|$ for all $z$.

For the inductive step, assume
$b_t(z) = \mathbf{1}[z \in \mathcal{C}_t] / |\mathcal{C}_t|$.
After action $a_t$ and observation $o_t$, Bayes' rule gives
\[
  b_{t+1}(z)
  = \frac{b_t(z)\; p(o_t \mid z,\, a_t)}
         {\sum_{z'} b_t(z')\; p(o_t \mid z',\, a_t)}.
\]
Substituting the inductive hypothesis and the binary likelihood:
\[
  b_{t+1}(z)
  = \frac{
      \frac{\mathbf{1}[z \in \mathcal{C}_t]}{|\mathcal{C}_t|}
      \cdot \mathbf{1}[z \in \mathcal{C}_{t+1}]
    }{
      \sum_{z' \in \mathcal{C}_t}
      \frac{1}{|\mathcal{C}_t|}
      \cdot \mathbf{1}[z' \in \mathcal{C}_{t+1}]
    }
  = \frac{\mathbf{1}[z \in \mathcal{C}_{t+1}]}
         {|\mathcal{C}_{t+1}|},
\]
where the numerator uses
$\mathcal{C}_{t+1} \subseteq \mathcal{C}_t$, so
$\mathbf{1}[z \in \mathcal{C}_t] \cdot
 \mathbf{1}[z \in \mathcal{C}_{t+1}]
 = \mathbf{1}[z \in \mathcal{C}_{t+1}]$,
and the denominator sums to
$|\mathcal{C}_{t+1}|/|\mathcal{C}_t|$, which cancels
$|\mathcal{C}_t|$.
\end{proof}

\paragraph{Remark.}
The oracle in CGS is implemented via GPT-4o mini, so the observation
kernel $O(o_t \mid z^\star, h_t, a_t)$ is stochastic\footnote{\url{https://community.openai.com/t/why-does-openai-api-behave-randomly-with-temperature-0-and-top-p-0/934104}} rather than
deterministic—even at temperature set to zero. However, the candidate-set update removes
inconsistent candidates given the \emph{realized} observation, so
$b_t$ is exact conditional on $o_t$; the stochasticity enters
through the EDP environment dynamics
(\Cref{sec:edp_theory}), not through belief-approximation error.
This makes the turn reward
$r_t = (|\mathcal{C}_t| - |\mathcal{C}_{t+1}|)/|\mathcal{C}_t|$
a direct measure of posterior contraction, computable in closed
form without learned reward models or Monte Carlo EIG
estimation~\citep{choudhury2025bed}.
 
\paragraph{Connection to BAMDPs and epistemic planning.}
    EDPs are best viewed as a task-level specialization of belief-state decision processes, rather than as a representationally separate alternative to POMDPs or BAMDPs. The distinction is one of modeling focus, but one that leverages the expressivity of LLMs that naturally  operate in exponentially complex language-based state-action spaces. BAMDPs typically maintain beliefs over unknown transition or reward parameters, or over latent MDP identities that are computationally expensive to track with LLMs, and use interaction to improve control under environment uncertainty. Epistemic planning methods such as Reflect-then-Plan similarly reason over model uncertainty to improve offline-to-online planning~\citep{jeong2025reflectthenplan}. In contrast, EDPs target information-seeking language agents operating through a fixed interaction interface: the environment need not be unknown in its dynamics, but the current task variable \(z^\star\) is hidden (See \Cref{ssec:edp_formulation}). The relevant belief is therefore a posterior over task hypotheses, \(p(z^\star \mid h_t)\), and the central question is whether an action acquires evidence that changes this posterior. Prior work~\citep{zhang2025markovianreflectiveexplorationbayesadaptive} is closest in spirit, using Bayes-adaptive RL to induce uncertainty-adaptive reflection over reasoning chains or MDP hypotheses~\citep{zhang2025markovianreflectiveexplorationbayesadaptive}; EDPs instead focus on external\footnote{In contrast to \citep{zhang2025markovianreflectiveexplorationbayesadaptive}, where adaptivity emerges “in-context” aka in reasoning chains within a turn, our focus in on adaptivity in an external sense i.e., across interaction turns.} query actions and observable belief updates \textit{across} interaction turns. Thus, the contribution of EDPs is not a new belief-state formalism in the abstract, but a lens that makes the credit signal for information-seeking actions explicit.

\section{ECHO Training Dynamics}
\label{app:training_dynamics}

\paragraph{Agent Prompts} \Cref{fig:csg-prompts} (A) shows prompts used for training of LLM agents in CSG environment as well as evaluation.  \Cref{fig:csg-prompts} (B) shows the prompt used to instantiate the clue (or hint) provider model with GPT-4o-mini with near-deterministic inference with temperature ($T=0$).  The CSG uses two prompts with distinct roles.
The selector prompt presents the agent with its current belief state ---
the remaining candidate list $C_t$ and the full arm query history ---
and requires structured JSON output specifying an arm index and a property question.
No chain-of-thought or reasoning scaffold is provided; the agent must select
informative actions directly from the belief state encoding.
The oracle prompt is given to a fixed \texttt{gpt-4o-mini} clue holder
at temperature zero, which receives the secret number and the agent's question
and returns a single deterministic sentence answering only the queried property.
The oracle is explicitly instructed not to volunteer information beyond what was asked
and to deflect vague questions, enforcing strict information separation across
the $K{=}5$ arms.
Repeat questions to the same arm incur a penalty of $r_t = -0.1$
rather than an oracle call, discouraging redundant queries.
This design ensures that all information gain flows exclusively through
the structured arm--question interface, making the per-turn reward signal
a clean measure of epistemic progress.

\begin{figure*}[t]
\centering
\small

\begin{tcolorbox}[
  title={\textbf{(A)}\enspace Clue Selector Agent Prompt\enspace(\texttt{build\_llm\_prompt})},
  fonttitle=\small\bfseries,
  colback=gray!6, colframe=gray!40,
  boxrule=0.4pt, arc=2pt,
  left=6pt, right=6pt, top=4pt, bottom=4pt,
  width=\linewidth
]
\ttfamily\scriptsize
\textbf{[SYSTEM]}\\
You are a helpful assistant. Follow instructions exactly.

\medskip
\textbf{[USER]}\\
You are solving a number guessing game.\\
A secret number has been chosen from the current candidate universe.\\
You have $K$ clue-holder arms. Choose one arm and ask one property question.

\medskip
Turn $t{+}1$ of $T$\\
Remaining candidates ($|\mathcal{C}_t|$): [$c_1, c_2, \ldots, c_k$]

\medskip
Arm histories:\\
\hspace*{1em}Arm 0 (queried $n\times$):\\
\hspace*{2em}Q: Is the number odd or even?\\
\hspace*{2em}A: The number is odd.\\
\hspace*{1em}Arm 1: not yet queried\\
\hspace*{1em}Arm 2: not yet queried\\
\hspace*{1em}Arm 3: not yet queried\\
\hspace*{1em}Arm 4: not yet queried

\medskip
Good questions ask about specific properties that split the candidates:\\
\hspace*{1em}parity, primality, divisibility, digit sum, last digit, range, perfect square.

\medskip
Do NOT repeat a question already asked to any arm.\\
Select an arm--question pair that is most informative for the current candidates.

\medskip
Respond with JSON only:\\
\hspace*{1em}\{"arm": \textlangle$0$ to $K{-}1$\textrangle, "question": "\textlangle precise property question\textrangle"\}
\end{tcolorbox}

\vspace{6pt}

\begin{tcolorbox}[
  title={\textbf{(B)}\enspace Clue Oracle Prompt\enspace(\texttt{call\_clue\_holder},\enspace\texttt{gpt-4o-mini},\enspace$T{=}0$)},
  fonttitle=\small\bfseries,
  colback=gray!6, colframe=gray!40,
  boxrule=0.4pt, arc=2pt,
  left=6pt, right=6pt, top=4pt, bottom=4pt,
  width=\linewidth
]
\ttfamily\scriptsize
\textbf{[SYSTEM]}\\
You are a clue holder in a number guessing game. Return one sentence only.

\medskip
\textbf{[USER]}\\
You know the secret number is \{secret\}.

\medskip
A player has asked you: ``\{question\}''

\medskip
Rules:\\
\hspace*{1em}-- Answer ONLY the specific property asked about.\\
\hspace*{1em}-- Do NOT volunteer extra information.\\
\hspace*{1em}-- If the question is vague, say:\\
\hspace*{2em}``Please ask about a specific property like parity, divisibility, or range.''\\
\hspace*{1em}-- Do not reveal the number directly.

\medskip
One sentence only.
\end{tcolorbox}

\caption{\small
Prompts used in the Clue Selector Game (CSG).
\textbf{(A)} The selector prompt shown to the \textbf{information-seeking agent} at each turn \(t\). The current candidate set \(\mathcal{C}_t\), remaining turn budget, and arm-specific query histories encode the agent's observable belief state. The agent must return JSON specifying an arm index and a property question, without a chain-of-thought scaffold.
\textbf{(B)} The \textbf{oracle agent} prompt used by a fixed \texttt{gpt-4o-mini} clue holder at temperature \(T{=}0\). The oracle knows the secret but may answer only the specific queried property and must \textit{not} reveal additional information and does not get the conversation history as input, motivated by prior work~\citep{zhang-etal-2024-probing}. The information reward is computed from the resulting candidate-set contraction,
\(r^{\mathrm{info}}_t=(|\mathcal{C}_t|-|\mathcal{C}_{t+1}|)/|\mathcal{C}_t|\).
The prompt format is held fixed across training and evaluation, with the candidate universe parameterized by the task setting.
}
\label{fig:csg-prompts}
\end{figure*}
\paragraph{Training dynamics and mechanism analysis.}
Figure~\ref{fig:clue_steps_additional_metrics} along with main metrics in \Cref{fig:clue_steps_four_panel_training_main} (\Cref{sec:results}) decomposes training into \textbf{outcome, mechanism, and optimization-health metrics}. The main pattern is that \textsc{ECHO} improves not only final task success, but also the intermediate epistemic behavior predicted by the EDP analysis. Episode resolve rate rises sharply, while mean eliminated candidates per turn increases, indicating that the policy learns to ask questions that contract the current candidate set more effectively. At the same time, redundancy falls to near zero, showing that the policy is not simply repeating high-probability questions, but is using the evolving history to avoid already-exhausted information. This supports the central motivation for \textbf{turn-level epistemic credit}: when the reward signal is assigned at the decision depth where an action changes the posterior, the policy receives direct supervision about which questions are useful under the current belief state.

\paragraph{Specialization without collapse.}
The diagnostic metrics further suggest that \textsc{ECHO}'s gains reflect healthy specialization rather than mode collapse. Entropy decreases during training, showing that the policy becomes more decisive, but diversity remains near saturation, indicating that it does not collapse onto a small set of repeated questions. Mechanistically, the policy appears to learn strong \textbf{within-step information gathering} before fully mastering higher-level arm selection: candidate elimination and efficiency improve substantially, while arm-match remains comparatively weaker. This is consistent with the task structure, where the policy must learn a joint action consisting of both an arm choice and a natural-language question. In EDP terms, \textsc{ECHO} first improves posterior-sensitive local action quality---asking useful questions for the current candidate set---while leaving additional headroom for learning better long-horizon coordination over which information source to query.

\begin{figure*}[t]
    \centering

    \includegraphics[width=0.24\linewidth]{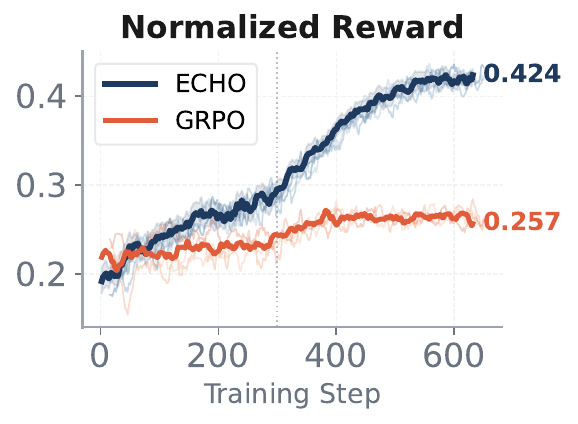}
    \hfill
    \includegraphics[width=0.24\linewidth]{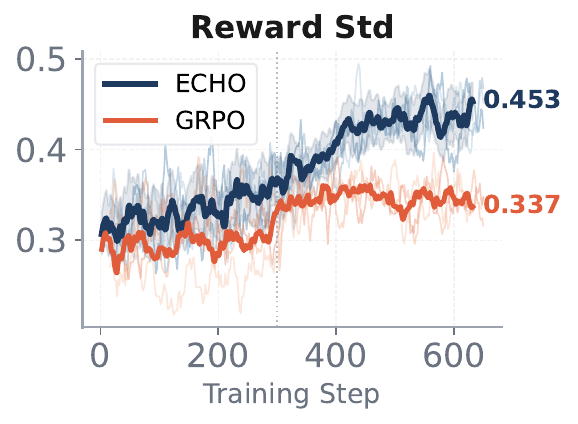}
    \hfill
    \includegraphics[width=0.24\linewidth]{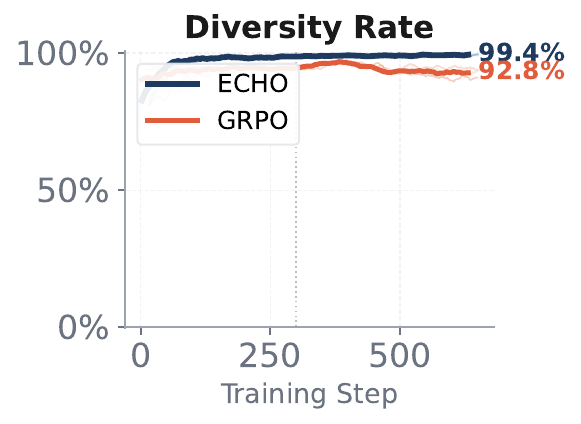}
    \hfill
    \includegraphics[width=0.24\linewidth]{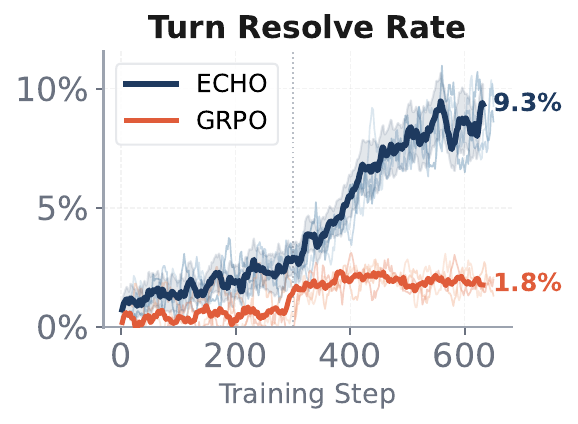}

    \vspace{0.8em}

    \includegraphics[width=0.24\linewidth]{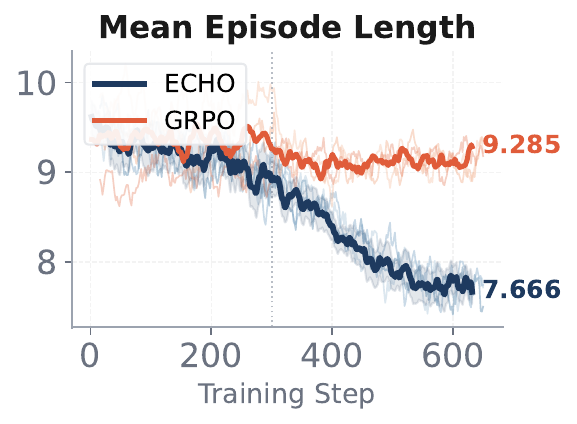}
    \hfill
    \includegraphics[width=0.24\linewidth]{figures/rl_per_step/clue_eliminated_mean.pdf}
    \hfill
    \includegraphics[width=0.24\linewidth]{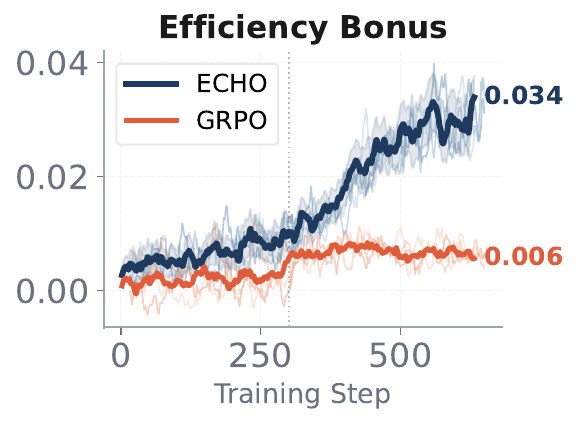}
    \hfill
    \includegraphics[width=0.24\linewidth]{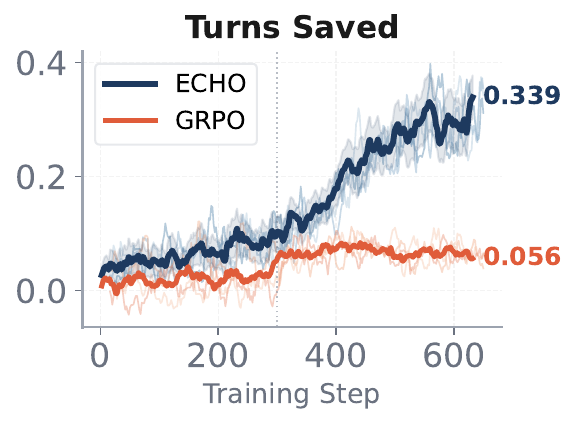}

    \vspace{0.8em}

    \includegraphics[width=0.31\linewidth]{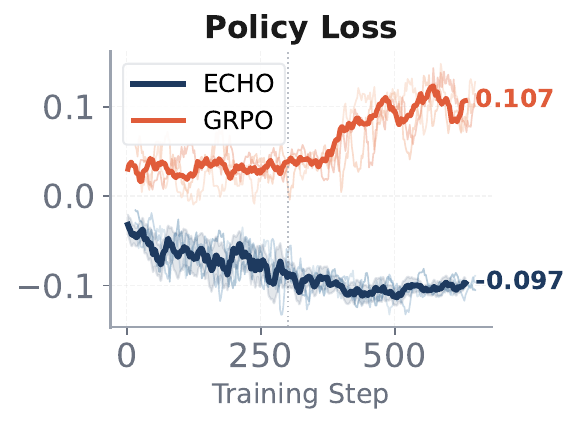}
    \hfill
    \includegraphics[width=0.31\linewidth]{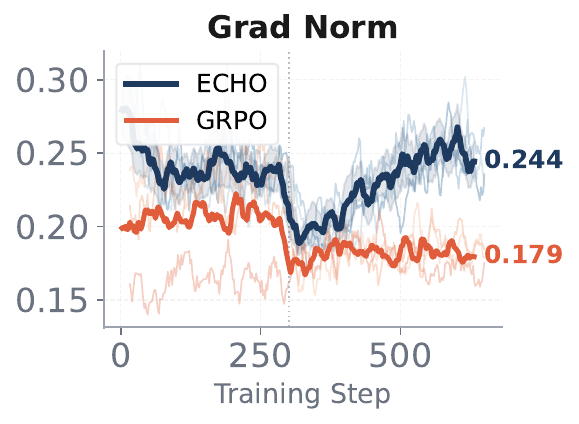}
    \hfill
    \includegraphics[width=0.31\linewidth]{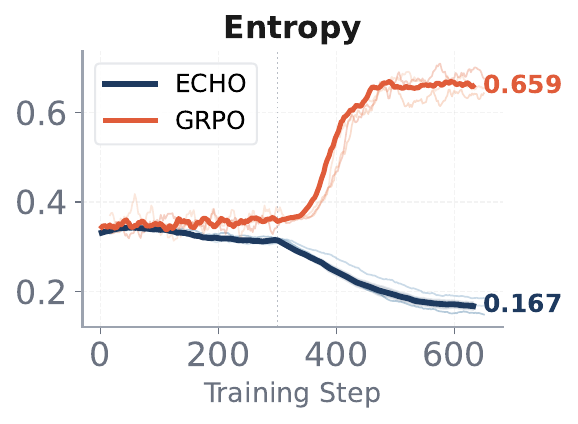}

    \caption{Additional learning dynamics of RL baselines and diagnostic metrics across training steps in the Clue Selector Game (CSG) environment.}
    \label{fig:clue_steps_additional_metrics}
\end{figure*}

\begin{table*}[t]
\centering
\scriptsize
\setlength{\tabcolsep}{6pt}
\renewcommand{\arraystretch}{1.12}
\begin{tabular}{lrrrr}
\toprule
\textbf{Metric} & \textbf{\textsc{ECHO}} & \textbf{GRPO} & \textbf{$\Delta$} & \textbf{Paired $t$-test $p$} \\
\midrule
Reward mean & 0.598 & 0.342 & +0.257 & $7.55{\times}10^{-4}$ \\
Normalized reward & 0.419 & 0.264 & +0.155 & $8.43{\times}10^{-4}$ \\
Episode resolve rate & 66.1\% & 17.1\% & +48.9 pts & $2.42{\times}10^{-4}$ \\
Mean eliminated candidates & 12.924 & 10.539 & +2.385 & $1.29{\times}10^{-3}$ \\
Redundancy rate $\downarrow$ & 0.26\% & 2.31\% & $-$2.04 pts & $1.55{\times}10^{-2}$ \\
Mean episode length $\downarrow$ & 7.724 & 9.136 & $-$1.412 & $9.88{\times}10^{-4}$ \\
Turn resolve rate & 8.66\% & 1.94\% & +6.72 pts & $3.69{\times}10^{-4}$ \\
Turns saved & 0.304 & 0.066 & +0.238 & $4.08{\times}10^{-4}$ \\
Entropy $\downarrow$ & 0.171 & 0.663 & $-$0.492 & $2.44{\times}10^{-3}$ \\
\bottomrule
\end{tabular}
\caption{
Final-window comparison between \textsc{ECHO} and episodic return baseline (GRPO). Values are seed-level means over the last 100 training steps, averaged across three seeds. \(p\)-values are from paired seed-level \(t\)-tests. Lower is better for redundancy rate, mean episode length, and entropy.
}
\label{tab:echo_grpo_final_window_stats}
\end{table*}

\paragraph{Seed-level significance.}
We compute final-window means over the last 100 training steps for each seed and compare \textsc{ECHO} against episodic GRPO using paired seed-level tests. \Cref{tab:echo_grpo_final_window_stats} shows the results. Across all three seeds, \textsc{ECHO} improves the primary outcome and mechanism metrics: episode resolve rate increases from \(17.1\%\) to \(66.1\%\), mean eliminated candidates rises from \(10.54\) to \(12.92\), turn resolve rate increases from \(1.94\%\) to \(8.66\%\), and mean episode length decreases from \(9.14\) to \(7.72\). Paired \(t\)-tests are significant for these metrics despite the small number of seeds, reflecting consistent improvements across runs. Since \(n=3\), we interpret these results primarily through effect size and cross-seed consistency rather than relying solely on significance tests.
\section{Additional Results}
\label{app:binary_search_echo}

\subsection{Main Evaluation Metric Definitions}
\label{ssec:main_task_epistemic_metrics_full}
We compute task-level and process-level metrics from CSG episode logs over 10 maximum turns. Unless otherwise stated, turn-level metrics are averaged over turns within a run, and model-level values report mean and standard error across runs. Additional metrics that we consider and those that CSG RL training environment provides are noted in \Cref{tab:metric_definitions} (for training-related metrics) and \Cref{tab:arm_metric_definitions} (for arm-routing specific metrics). 

\begin{itemize}
    \item \textbf{Resolve} measures final task success: the fraction of episodes in which the agent identifies the secret within the turn budget.

    \item \textbf{Zero} measures zero-elimination behavior: the fraction of non-redundant turns that eliminate no candidates from the current candidate set. Redundant turns are excluded so that Zero captures new but uninformative questions.

    \item \textbf{Qual.} measures candidate-elimination quality relative to the current belief state. It normalizes the number of eliminated candidates by an ideal binary split of the current candidate set and caps the value at \(1\). Thus, high Qual. indicates questions that efficiently partition the remaining candidate set.

    \item \textbf{Ground} measures grounded information seeking: the fraction of turns that are both non-redundant and sufficiently informative under the current candidate set. In our implementation, a turn is grounded when its elimination quality is at least \(0.5\).

    \item \textbf{LateCalib} measures late-stage redundancy in narrow belief states. It is computed on turns with small remaining candidate sets late in the episode and records whether the action was redundant. Lower values indicate fewer wasted turns after the belief state has already narrowed.

    \item \textbf{Reason\%} measures visible reasoning text outside the required structured action format. It is used only as a diagnostic for whether a policy relies on explicit natural-language rationales rather than direct action selection.

    \item \textbf{ZeroEvents} counts zero-elimination non-redundant turns that have a following turn, and serves as the event denominator for recovery metrics.
 
\item \textbf{Rec1} measures weak immediate recovery after a Zero event: the fraction of Zero events after which the next turn is non-redundant and eliminates at least one candidate. It captures whether the agent resumes gaining any information at all.

\item \textbf{GRecover} measures grounded immediate recovery after a Zero event: the fraction of Zero events after which the next turn is grounded, meaning non-redundant and sufficiently informative under the current candidate set. This is stricter than Rec1 because the next turn must not merely eliminate something; it must pass the groundedness threshold.

    \item \textbf{RecQ} measures the elimination quality of the immediate next turn after a Zero event.

    \item \textbf{NextZero} measures whether a Zero event is followed by another zero-elimination non-redundant turn. Lower is better.

    \item \textbf{NextBad} measures whether a Zero event is followed by either another Zero event or a redundant turn. Lower is better.

    \item \textbf{TTR} measures the number of turns until the next grounded turn after a Zero event, averaged over cases where grounded recovery occurs. Lower is better.

    \item \textbf{TTRsucc} measures whether a grounded recovery occurs at any later turn after a Zero event before the episode ends.

    \item \textbf{ResAfterZero} measures episode-level robustness after local failure: among episodes containing at least one Zero event, it reports the fraction that still resolve.
\end{itemize}

\subsection{From Credit Signal to Strategy: The Emergence of Binary Search}

\paragraph{Oracle response distributions reveal emergent strategy.}
To characterize policy behavior beyond resolve rate, we analyze the distribution of oracle hint types received across evaluation episodes in CSG. This is shown in Table~\ref{tab:csg_hint_distribution_all}. For each turn, we classify the oracle response into coarse categories such as range, parity, divisibility, primality, square, digit properties, between, uninformative, and redundant, using keyword matching against the hint string. We also report the \emph{zero-elimination rate}: the fraction of non-redundant turns where the selected question eliminated no candidates. This captures cases where the agent asked a valid question, but the question was unhelpful under the current belief state. Table~\ref{tab:csg_hint_distribution_all} shows that \textsc{ECHO} has the lowest zero-elimination rate among evaluated policies (\(18.6\%\)), substantially below trajectory-level GRPO (Episodic Return) (\(39.8\%\)), RLOO per-turn (\(40.2\%\)), and prompting baselines such as Qwen2.5-1.5B + CoT (\(38.4\%\)) and full ReAct (\(49.2\%\)). Its hint distribution is also highly concentrated on \textit{range-based} clues (\(60.7\%\)), with low redundancy (\(0.6\%\)) and little reliance on digit-property or miscellaneous hints. This suggests that \textsc{ECHO} does not merely improve final resolution; it changes the mechanism of interaction toward consistently informative, belief-sensitive queries.

\paragraph{Bisection as a special case of the EDP objective.}
The emergence of binary search in \textsc{ECHO} is a consequence of the CSG environment, not a universal property of epistemically adaptive policies. In CSG, the initial prior is uniform, \(\rho_0=\mathrm{Unif}(\{1,\ldots,N\})\), and deterministic oracle filtering preserves a uniform posterior over the remaining candidates, \(b_t=\mathrm{Unif}(\mathcal{C}_t)\). For a binary question that partitions \(\mathcal{C}_t\) into fractions \(p\) and \(1-p\), the expected fractional elimination is
\[
    \mathbb{E}_{z^\star\sim b_t}[u(b_t,a_t,o_t)] = 2p(1-p),
\]
which is maximized at \(p=1/2\). Thus, under uniform beliefs, deterministic binary observations, and equal action costs, the EDP objective recovers bisection as the information-theoretically optimal strategy.

In general EDPs, these conditions need not hold. If the belief is non-uniform, as in diagnosis or retrieval settings, the optimal query should account for posterior mass rather than split the hypothesis set evenly. If observations are graded, noisy, or multi-valued, the optimal action depends on the task-specific observation model and belief geometry. The Bayesian value \(Q^\pi_{\mathrm{B}}(h_t,a_t)=\mathbb{E}_{z^\star\sim b_t}[Q^\pi(h_t,a_t\mid z^\star)]\) remains the relevant object, but its maximizer is not necessarily a bisection question. Thus, bisection is a special solution for the uniform-prior, deterministic-observation regime of CSGG, while the EDP framework and \textsc{ECHO}'s posterior-sensitive turn-level credit are more general.

\paragraph{Binary-search-like strategy in CSG} In our experiments in CSG, the hint distribution clearly explains why \textsc{ECHO} converges toward a binary-search-like strategy rather than the digit-property-heavy strategy used by the greedy oracle. In the CSG instantiation of the EDP (\Cref{sec:experiments}), the dominant evidential utility is fractional candidate elimination,
\(u(b_t,a_t,o_t)=|\mathcal{C}_t\setminus\mathcal{C}_{t+1}|/|\mathcal{C}_t|\), which measures posterior contraction relative to the current belief size. Since the secret is sampled uniformly and oracle observations only eliminate inconsistent candidates, \(b_t=\mathrm{Unif}(\mathcal{C}_t)\). For a binary question whose ``yes'' set occupies fraction \(p\) of the current candidate set, the expected fractional elimination under this uniform belief is \(2p(1-p)\), maximized at \(p=1/2\). Range-bisection questions can therefore achieve near-maximal expected utility by splitting \(\mathcal{C}_t\) roughly in half. By contrast, digit-property questions such as last digit or digit sum can be highly discriminative for some secrets, but often correspond to imbalanced predicates under the uniform posterior, giving lower expected utility on average. Thus, \textbf{\textsc{ECHO}} learns the policy that is optimal for an agent uncertain over \(z^\star\): ask questions that maximize expected posterior collapse. The \textbf{greedy oracle} baseline, in contrast, effectively conditions on the true latent variable and can exploit digit properties with high realized elimination for that specific secret, explaining its \(50.7\%\) combined use of last-digit and digit-sum hints versus \textsc{ECHO}'s \(60.7\%\) use of range clues.

\begin{table*}[t]
\centering
\scriptsize
\setlength{\tabcolsep}{3.5pt}
\renewcommand{\arraystretch}{1.08}
\resizebox{\linewidth}{!}{
\begin{tabular}{lrrrrrrrrrrrrr}
\toprule
\textbf{Model} &
\textbf{Turns} &
\textbf{ZeroElim} &
\textbf{Range} &
\textbf{Divis.} &
\textbf{Parity} &
\textbf{Other} &
\textbf{Redund.} &
\textbf{Uninf.} &
\textbf{Prime} &
\textbf{Square} &
\textbf{Between} &
\textbf{LastDigit} &
\textbf{DigitSum} \\
\midrule

\multicolumn{14}{l}{\textit{Frontier baselines}} \\
\hdashline
Claude Sonnet 4.6
& 2005 & 509 {\scriptsize (25.4)}
& 969 {\scriptsize (48.3)}
& 417 {\scriptsize (20.8)}
& 230 {\scriptsize (11.5)}
& 286 {\scriptsize (14.3)}
& 14 {\scriptsize (0.7)}
& 16 {\scriptsize (0.8)}
& 35 {\scriptsize (1.7)}
& 19 {\scriptsize (0.9)}
& 3 {\scriptsize (0.1)}
& 12 {\scriptsize (0.6)}
& 4 {\scriptsize (0.2)} \\

o3-mini
& 2098 & 626 {\scriptsize (29.8)}
& 503 {\scriptsize (24.0)}
& 747 {\scriptsize (35.6)}
& 452 {\scriptsize (21.5)}
& 114 {\scriptsize (5.4)}
& 135 {\scriptsize (6.4)}
& --
& 68 {\scriptsize (3.2)}
& 21 {\scriptsize (1.0)}
& 1 {\scriptsize (0.0)}
& 20 {\scriptsize (1.0)}
& 37 {\scriptsize (1.8)} \\

GPT-4.1-mini
& 2051 & 588 {\scriptsize (28.7)}
& 590 {\scriptsize (28.8)}
& 1069 {\scriptsize (52.1)}
& 218 {\scriptsize (10.6)}
& 58 {\scriptsize (2.8)}
& 11 {\scriptsize (0.5)}
& 7 {\scriptsize (0.3)}
& 77 {\scriptsize (3.8)}
& 4 {\scriptsize (0.2)}
& 4 {\scriptsize (0.2)}
& 13 {\scriptsize (0.6)}
& -- \\

GPT-4o
& 6354 & 1873 {\scriptsize (29.5)}
& 1873 {\scriptsize (29.5)}
& 2737 {\scriptsize (43.1)}
& 686 {\scriptsize (10.8)}
& 350 {\scriptsize (5.5)}
& 19 {\scriptsize (0.3)}
& 51 {\scriptsize (0.8)}
& 262 {\scriptsize (4.1)}
& 187 {\scriptsize (2.9)}
& 6 {\scriptsize (0.1)}
& 111 {\scriptsize (1.7)}
& 72 {\scriptsize (1.1)} \\

GPT-4o-mini
& 2146 & 649 {\scriptsize (30.2)}
& 519 {\scriptsize (24.2)}
& 1180 {\scriptsize (55.0)}
& 224 {\scriptsize (10.4)}
& 20 {\scriptsize (0.9)}
& 5 {\scriptsize (0.2)}
& --
& 151 {\scriptsize (7.0)}
& 44 {\scriptsize (2.1)}
& 1 {\scriptsize (0.0)}
& 2 {\scriptsize (0.1)}
& -- \\

GPT-4.1-nano
& 2205 & 785 {\scriptsize (35.6)}
& 468 {\scriptsize (21.2)}
& 1187 {\scriptsize (53.8)}
& 200 {\scriptsize (9.1)}
& 113 {\scriptsize (5.1)}
& 96 {\scriptsize (4.4)}
& 1 {\scriptsize (0.0)}
& 85 {\scriptsize (3.9)}
& 14 {\scriptsize (0.6)}
& 23 {\scriptsize (1.0)}
& 17 {\scriptsize (0.8)}
& 1 {\scriptsize (0.0)} \\

GPT-4o + ReAct
& 2082 & 590 {\scriptsize (28.3)}
& 607 {\scriptsize (29.2)}
& 779 {\scriptsize (37.4)}
& 242 {\scriptsize (11.6)}
& 185 {\scriptsize (8.9)}
& 2 {\scriptsize (0.1)}
& 22 {\scriptsize (1.1)}
& 48 {\scriptsize (2.3)}
& 68 {\scriptsize (3.3)}
& 1 {\scriptsize (0.0)}
& 71 {\scriptsize (3.4)}
& 57 {\scriptsize (2.7)} \\

\midrule
\multicolumn{14}{l}{\textit{RL-trained baselines}} \\
\hdashline
Episodic Return (GRPO)
& 4333 & 1723 {\scriptsize (39.8)}
& 1001 {\scriptsize (23.1)}
& 1532 {\scriptsize (35.4)}
& 812 {\scriptsize (18.7)}
& 583 {\scriptsize (13.5)}
& 118 {\scriptsize (2.7)}
& 12 {\scriptsize (0.3)}
& 152 {\scriptsize (3.5)}
& 106 {\scriptsize (2.4)}
& 10 {\scriptsize (0.2)}
& 6 {\scriptsize (0.1)}
& 1 {\scriptsize (0.0)} \\

RLOO per-turn
& 2177 & 875 {\scriptsize (40.2)}
& 654 {\scriptsize (30.0)}
& 721 {\scriptsize (33.1)}
& 343 {\scriptsize (15.8)}
& 300 {\scriptsize (13.8)}
& 16 {\scriptsize (0.7)}
& 7 {\scriptsize (0.3)}
& 86 {\scriptsize (4.0)}
& 38 {\scriptsize (1.7)}
& 10 {\scriptsize (0.5)}
& 1 {\scriptsize (0.0)}
& 1 {\scriptsize (0.0)} \\

Offline SFT+RL
& 2152 & 887 {\scriptsize (41.2)}
& 34 {\scriptsize (1.6)}
& 1689 {\scriptsize (78.5)}
& 235 {\scriptsize (10.9)}
& 104 {\scriptsize (4.8)}
& 52 {\scriptsize (2.4)}
& 3 {\scriptsize (0.1)}
& 28 {\scriptsize (1.3)}
& 4 {\scriptsize (0.2)}
& --
& 3 {\scriptsize (0.1)}
& -- \\

\midrule
\multicolumn{14}{l}{\textit{Base and prompting baselines}} \\
\hdashline
Qwen2.5-0.5B
& 2219 & 1115 {\scriptsize (50.2)}
& 397 {\scriptsize (17.9)}
& 878 {\scriptsize (39.6)}
& 385 {\scriptsize (17.4)}
& 185 {\scriptsize (8.3)}
& 222 {\scriptsize (10.0)}
& 41 {\scriptsize (1.8)}
& 47 {\scriptsize (2.1)}
& 14 {\scriptsize (0.6)}
& 3 {\scriptsize (0.1)}
& 42 {\scriptsize (1.9)}
& 5 {\scriptsize (0.2)} \\

Qwen2.5-1.5B
& 8856 & 3698 {\scriptsize (41.8)}
& 1125 {\scriptsize (12.7)}
& 2878 {\scriptsize (32.5)}
& 1556 {\scriptsize (17.6)}
& 1308 {\scriptsize (14.8)}
& 983 {\scriptsize (11.1)}
& 216 {\scriptsize (2.4)}
& 379 {\scriptsize (4.3)}
& 186 {\scriptsize (2.1)}
& 88 {\scriptsize (1.0)}
& 97 {\scriptsize (1.1)}
& 40 {\scriptsize (0.5)} \\

Qwen2.5-3B
& 2204 & 794 {\scriptsize (36.0)}
& 133 {\scriptsize (6.0)}
& 1559 {\scriptsize (70.7)}
& 242 {\scriptsize (11.0)}
& 96 {\scriptsize (4.4)}
& 133 {\scriptsize (6.0)}
& --
& 20 {\scriptsize (0.9)}
& 21 {\scriptsize (1.0)}
& --
& --
& -- \\

Qwen2.5-7B
& 2108 & 716 {\scriptsize (34.0)}
& 213 {\scriptsize (10.1)}
& 1419 {\scriptsize (67.3)}
& 228 {\scriptsize (10.8)}
& 125 {\scriptsize (5.9)}
& 38 {\scriptsize (1.8)}
& 14 {\scriptsize (0.7)}
& 54 {\scriptsize (2.6)}
& 16 {\scriptsize (0.8)}
& --
& --
& 1 {\scriptsize (0.0)} \\

Qwen2.5-14B
& 2110 & 576 {\scriptsize (27.3)}
& 85 {\scriptsize (4.0)}
& 1648 {\scriptsize (78.1)}
& 232 {\scriptsize (11.0)}
& 39 {\scriptsize (1.8)}
& 36 {\scriptsize (1.7)}
& 2 {\scriptsize (0.1)}
& 59 {\scriptsize (2.8)}
& 7 {\scriptsize (0.3)}
& --
& 2 {\scriptsize (0.1)}
& -- \\

Qwen2.5-1.5B + CoT
& 2209 & 849 {\scriptsize (38.4)}
& 293 {\scriptsize (13.3)}
& 795 {\scriptsize (36.0)}
& 410 {\scriptsize (18.6)}
& 312 {\scriptsize (14.1)}
& 243 {\scriptsize (11.0)}
& 10 {\scriptsize (0.5)}
& 81 {\scriptsize (3.7)}
& 62 {\scriptsize (2.8)}
& 2 {\scriptsize (0.1)}
& 1 {\scriptsize (0.0)}
& -- \\

Qwen2.5-1.5B + ReAct
& 4448 & 2000 {\scriptsize (45.0)}
& 591 {\scriptsize (13.3)}
& 1289 {\scriptsize (29.0)}
& 727 {\scriptsize (16.3)}
& 683 {\scriptsize (15.4)}
& 480 {\scriptsize (10.8)}
& 191 {\scriptsize (4.3)}
& 206 {\scriptsize (4.6)}
& 61 {\scriptsize (1.4)}
& 86 {\scriptsize (1.9)}
& 95 {\scriptsize (2.1)}
& 39 {\scriptsize (0.9)} \\

Qwen2.5-1.5B + ReAct$^\dagger$
& 2224 & 1095 {\scriptsize (49.2)}
& 302 {\scriptsize (13.6)}
& 544 {\scriptsize (24.5)}
& 316 {\scriptsize (14.2)}
& 383 {\scriptsize (17.2)}
& 213 {\scriptsize (9.6)}
& 132 {\scriptsize (5.9)}
& 110 {\scriptsize (4.9)}
& 7 {\scriptsize (0.3)}
& 85 {\scriptsize (3.8)}
& 95 {\scriptsize (4.3)}
& 37 {\scriptsize (1.7)} \\

\midrule
\multicolumn{14}{l}{\textit{Ours}} \\
\hdashline
\rowcolor{echoshade}
\textbf{\textsc{ECHO}}
& 1998 & \textbf{371 {\scriptsize (18.6)}}
& \textbf{1213 {\scriptsize (60.7)}}
& 440 {\scriptsize (22.0)}
& 252 {\scriptsize (12.6)}
& 37 {\scriptsize (1.9)}
& 11 {\scriptsize (0.6)}
& --
& 16 {\scriptsize (0.8)}
& 2 {\scriptsize (0.1)}
& 27 {\scriptsize (1.4)}
& --
& -- \\

\midrule
\multicolumn{14}{l}{\textit{Oracle}} \\
\hdashline
\rowcolor{oracleshade}
Greedy oracle
& 763 & 226 {\scriptsize (29.6)}
& 64 {\scriptsize (8.4)}
& 134 {\scriptsize (17.6)}
& 43 {\scriptsize (5.6)}
& --
& --
& --
& 43 {\scriptsize (5.6)}
& 40 {\scriptsize (5.2)}
& 52 {\scriptsize (6.8)}
& \textbf{168 {\scriptsize (22.0)}}
& \textbf{219 {\scriptsize (28.7)}} \\

\bottomrule
\end{tabular}
}
\caption{\textbf{Hint-type distribution} from the Oracle (GPT-4o-mini) across all evaluated policies in the Clue Selector Game evaluation set. Each cell reports the number of turns assigned to a hint category, with the percentage of total turns for that policy in parentheses. ZeroElim denotes non-redundant turns that eliminated no candidates. \textsc{ECHO} is highlighted and shows the lowest zero-elimination rate among learned policies while relying more strongly on range-based clues.}
\label{tab:csg_hint_distribution_all}
\end{table*}

\paragraph{Arm routing versus belief-conditioned information seeking.}

\begin{table}[t]
\centering
\scriptsize
\setlength{\tabcolsep}{4pt}
\renewcommand{\arraystretch}{1.08}
\begin{tabular}{p{0.18\linewidth}p{0.74\linewidth}}
\toprule
\textbf{Metric} & \textbf{Meaning} \\
\midrule




\textbf{ArmEnt} & Normalized entropy of arm choices, measuring how evenly the policy distributes actions across the available arms. \\

\textbf{DomArm} & Dominant-arm share: the fraction of turns assigned to the policy's most frequently selected arm. \\

\textbf{StaticComp} & Static arm compatibility: whether the question type is semantically compatible with the fixed property family assigned to the selected arm. \\

\textbf{ArmMatch} & Logged arm-match rate: agreement between the selected arm and the environment's target arm label. \\

\textbf{SameArm} & Fraction of turns in which the policy selects the same arm as in the previous turn. \\

\textbf{ZeroSame} & Zero-elimination rate conditioned on selecting the same arm as in the previous turn. \\

\textbf{ZeroSwitch} & Zero-elimination rate conditioned on switching to a different arm from the previous turn. \\

\textbf{QualSame} & Mean elimination quality conditioned on selecting the same arm as in the previous turn. \\

\textbf{QualSwitch} & Mean elimination quality conditioned on switching to a different arm from the previous turn. \\
\bottomrule
\end{tabular}
\caption{Definitions of arm-routing and belief-state diagnostic metrics.}
\label{tab:arm_metric_definitions}
\end{table}
Table~\ref{tab:arm_routing_belief_diagnostics} tests whether \textsc{ECHO}'s gains are explained by raw arm exploration, static arm matching, or belief-conditioned information gain. The results rule out a simple exploration account: \textsc{ECHO} has lower arm entropy than several weaker baselines such as GRPO, RLOO, and Qwen2.5-1.5B base model, yet achieves substantially higher resolve rate. Instead, \textsc{ECHO} exhibits the strongest learned-policy combination of low zero-elimination rate, high elimination quality, high groundedness, and high semantic compatibility between selected arms and question types. Raw ArmMatch alone is not sufficient to explain performance: \textsc{ECHO}'s ArmMatch is modest, but its selected arm-question pairs eliminate candidates much more reliably than GRPO or base-model baselines. The same-arm diagnostics further show that arm repetition is not inherently harmful. While frontier models often suffer when reusing an arm, \textsc{ECHO}'s same-arm turns remain highly informative, with lower zero-elimination and higher elimination quality than its switched-arm turns. This suggests that \textsc{ECHO} learns to reuse and route arms in a belief-conditioned manner rather than merely maximizing static arm labels.

\begin{table*}[t]
\centering
\scriptsize
\setlength{\tabcolsep}{4.2pt}
\renewcommand{\arraystretch}{1.08}
\resizebox{\linewidth}{!}{
\begin{tabular}{lccccccccccccc}
\toprule
\textbf{Model} &
\textbf{Res. $\uparrow$} &
\textbf{Zero $\downarrow$} &
\textbf{Qual. $\uparrow$} &
\textbf{Ground $\uparrow$} &
\textbf{ArmEnt} &
\textbf{DomArm} &
\textbf{StaticComp $\uparrow$} &
\textbf{ArmMatch $\uparrow$} &
\textbf{SameArm} &
\textbf{ZeroSame $\downarrow$} &
\textbf{ZeroSwitch $\downarrow$} &
\textbf{QualSame $\uparrow$} &
\textbf{QualSwitch $\uparrow$} \\
\midrule

\rowcolor{oracleshade}
Greedy oracle (reference)
& 0.862 & 0.296 & 0.700 & 0.704 & 0.839 & 0.507 & 1.000 & 1.000 & 0.468 & 0.393 & 0.423 & 0.603 & 0.570 \\

\midrule
\rowcolor{echoshade}
\textbf{\textsc{ECHO} (Ours)}
& \textbf{0.453} & \textbf{0.186} & \textbf{0.670} & \textbf{0.718} & 0.770 & 0.536 & \textbf{0.581} & 0.183 & 0.565 & \textbf{0.172} & \textbf{0.257} & \textbf{0.670} & \textbf{0.582} \\

\midrule
\multicolumn{14}{l}{\textit{Frontier baselines}} \\
\hdashline
Claude Sonnet 4.6
& 0.431 & 0.254 & 0.701 & 0.738 & 0.976 & 0.309 & 0.345 & 0.146 & 0.112 & 0.645 & 0.239 & 0.313 & 0.721 \\
GPT-4o
& 0.258 & 0.298 & 0.625 & 0.662 & 0.996 & 0.230 & 0.391 & 0.172 & 0.215 & 0.596 & 0.261 & 0.318 & 0.654 \\

\midrule
\multicolumn{14}{l}{\textit{RL-trained Qwen2.5-1.5B baselines}} \\
\hdashline
RLOO per-turn
& 0.138 & 0.402 & 0.446 & 0.445 & 0.960 & 0.276 & 0.311 & 0.135 & 0.319 & 0.406 & 0.458 & 0.421 & 0.390 \\
Episodic Return (GRPO)
& 0.118 & 0.398 & 0.423 & 0.419 & 0.944 & 0.292 & 0.288 & 0.141 & 0.378 & 0.393 & 0.469 & 0.409 & 0.354 \\

\midrule
\multicolumn{14}{l}{\textit{Base and prompting baselines}} \\
\hdashline
Qwen2.5-1.5B
& 0.080 & 0.386 & 0.352 & 0.340 & 0.944 & 0.319 & 0.238 & 0.138 & 0.391 & 0.401 & 0.446 & 0.300 & 0.287 \\
Qwen2.5-1.5B + ReAct
& 0.044 & 0.492 & 0.312 & 0.316 & 0.826 & 0.537 & 0.313 & 0.189 & 0.552 & 0.525 & 0.564 & 0.237 & 0.265 \\

\bottomrule
\end{tabular}
}
\caption{
\textbf{Arm-routing and belief-state diagnostics in the Clue Selector Game}. See \Cref{tab:arm_metric_definitions} for of arm-routing and belief-state diagnostic metrics in CSG.}
\label{tab:arm_routing_belief_diagnostics}
\end{table*}

\begin{table*}[t]
\centering
\scriptsize
\setlength{\tabcolsep}{4.2pt}
\renewcommand{\arraystretch}{1.05}
\resizebox{\linewidth}{!}{
\begin{tabular}{lccccccccccc}
\toprule
\textbf{Model} & \textbf{Resolve} & \textbf{N} & \textbf{Rec1 $\uparrow$} & \textbf{GRec1 $\uparrow$} & \textbf{RecQ $\uparrow$} & \textbf{Gain $\uparrow$} & \textbf{TTR $\downarrow$} & \textbf{TTRsucc $\uparrow$} & \textbf{Sust2 $\uparrow$} & \textbf{NextRed $\downarrow$} & \textbf{EpRes $\uparrow$} \\
\midrule
\multicolumn{12}{l}{\textit{Frontier baselines}} \\
\hdashline
Claude Sonnet 4.6 & 43.1\% & 509 & 0.145 & 0.145 & 0.139 & 0.078 & 2.23 & 0.303 & 0.102 & 0.240 & 0.169 \\
o3-mini & 32.0\% & 384 & 0.198 & 0.190 & 0.179 & 0.096 & 1.34 & 0.258 & 0.202 & 0.255 & 0.171 \\
GPT-4.1-mini & 31.6\% & 588 & 0.223 & 0.219 & 0.201 & 0.109 & \textbf{1.32} & 0.298 & 0.154 & 0.243 & 0.156 \\
GPT-4o & 25.8\% & 634 & 0.229 & 0.219 & 0.195 & 0.107 & 1.38 & 0.314 & 0.226 & 0.249 & 0.157 \\
GPT-4o + ReAct & 25.3\% & 590 & 0.193 & 0.173 & 0.166 & 0.097 & 1.50 & 0.276 & 0.122 & 0.254 & 0.087 \\
GPT-4o-mini & 22.7\% & 649 & 0.337 & 0.284 & 0.269 & 0.145 & 1.45 & 0.413 & 0.162 & 0.203 & 0.126 \\
GPT-4.1-nano & 10.2\% & 785 & 0.317 & 0.271 & 0.258 & 0.146 & 1.60 & 0.469 & 0.122 & 0.242 & 0.082 \\

\midrule
\multicolumn{12}{l}{\textit{Base and prompting baselines}} \\
\hdashline
Qwen2.5-14B-Instruct & 22.7\% & 576 & 0.311 & 0.227 & 0.217 & 0.124 & 1.41 & 0.339 & 0.132 & 0.248 & 0.121 \\
Qwen2.5-7B-Instruct & 20.9\% & 716 & 0.189 & 0.159 & 0.152 & 0.085 & 1.59 & 0.254 & 0.051 & 0.236 & 0.106 \\
Qwen2.5-3B-Instruct & 8.4\% & 794 & 0.316 & 0.234 & 0.239 & 0.136 & 1.64 & 0.390 & 0.162 & 0.243 & 0.072 \\
Qwen2.5-1.5B-Instruct & 8.0\% & 849 & 0.329 & 0.200 & 0.212 & 0.127 & 2.16 & 0.476 & 0.064 & 0.269 & 0.064 \\
Qwen2.5-1.5B + CoT & 6.7\% & 849 & 0.304 & 0.194 & 0.203 & 0.122 & 2.09 & 0.429 & 0.068 & 0.254 & 0.058 \\
Qwen2.5-1.5B + ReAct & 6.2\% & 905 & 0.326 & 0.213 & 0.224 & 0.134 & 2.22 & \textbf{0.492} & 0.068 & 0.261 & 0.058 \\
Qwen2.5-1.5B + ReAct$^\dagger$ & 4.4\% & 1095 & 0.256 & 0.181 & 0.179 & 0.111 & 2.27 & 0.453 & 0.051 & 0.215 & 0.036 \\
Qwen2.5-0.5B-Instruct & 3.6\% & 1115 & 0.274 & 0.167 & 0.182 & 0.106 & 2.37 & 0.421 & 0.034 & 0.225 & 0.027 \\

\midrule
\multicolumn{12}{l}{\textit{RL-trained Qwen2.5-1.5B baselines}} \\
\hdashline
RLOO per-turn & 13.8\% & 875 & 0.342 & 0.249 & 0.247 & 0.145 & 1.85 & 0.487 & 0.086 & \textbf{0.169} & 0.126 \\
PBRS learned + RLOO turn & 13.3\% & 845 & 0.324 & 0.241 & 0.233 & 0.137 & 1.73 & 0.424 & 0.115 & 0.188 & 0.114 \\
PBRS learned & 12.4\% & 835 & 0.328 & 0.232 & 0.228 & 0.128 & 1.79 & 0.423 & 0.088 & 0.182 & 0.101 \\
Episodic Return (GRPO) & 11.8\% & 2014 & \textbf{0.364} & 0.259 & 0.259 & \textbf{0.153} & 1.82 & 0.488 & 0.096 & 0.202 & 0.089 \\
Offline SFT+RL & 13.3\% & 887 & 0.262 & 0.170 & 0.182 & 0.116 & 1.95 & 0.363 & 0.042 & 0.200 & 0.089 \\

\midrule
\rowcolor{echoshade}
\textbf{\textsc{ECHO} (Ours)} & \textbf{45.3\%} & 371 & 0.342 & \textbf{0.313} & \textbf{0.284} & 0.151 & 1.37 & 0.437 & \textbf{0.261} & 0.243 & \textbf{0.305} \\
\bottomrule
\end{tabular}
}
\caption{Recovery metrics conditioned on zero-information events in the Clue Selector Game. Metrics summarize whether a model recovers after an uninformative step, how quickly it recovers, whether recovery is sustained, and whether the episode ultimately resolves. Arrows indicate preferred direction; best values are bolded.}
\label{tab:csg_recovery_zero}
\end{table*}
\begin{table*}[t]
\centering
\scriptsize
\setlength{\tabcolsep}{4.2pt}
\renewcommand{\arraystretch}{1.05}
\resizebox{\linewidth}{!}{
\begin{tabular}{lccccccccccc}
\toprule
\textbf{Model} & \textbf{Resolve} & \textbf{N} & \textbf{Rec1 $\uparrow$} & \textbf{GRec1 $\uparrow$} & \textbf{RecQ $\uparrow$} & \textbf{Gain $\uparrow$} & \textbf{TTR $\downarrow$} & \textbf{TTRsucc $\uparrow$} & \textbf{Sust2 $\uparrow$} & \textbf{NextRed $\downarrow$} & \textbf{EpRes $\uparrow$} \\
\midrule
\multicolumn{12}{l}{\textit{Frontier baselines}} \\
\hdashline
Claude Sonnet 4.6 & 43.1\% & 742 & 0.193 & 0.193 & 0.188 & 0.019 & 1.86 & 0.311 & 0.075 & 0.282 & 0.369 \\
o3-mini & 32.0\% & 524 & 0.208 & 0.202 & 0.193 & 0.024 & 1.32 & 0.263 & 0.150 & 0.267 & 0.297 \\
GPT-4.1-mini & 31.6\% & 867 & 0.300 & 0.254 & 0.241 & 0.062 & 1.40 & 0.363 & 0.140 & 0.234 & 0.267 \\
GPT-4o & 25.8\% & 880 & 0.266 & 0.257 & 0.232 & 0.054 & \textbf{1.31} & 0.340 & 0.220 & 0.247 & 0.244 \\
GPT-4o + ReAct & 25.3\% & 797 & 0.237 & 0.212 & 0.205 & 0.051 & 1.46 & 0.320 & 0.124 & 0.240 & 0.200 \\
GPT-4o-mini & 22.7\% & 942 & 0.394 & \textbf{0.311} & \textbf{0.300} & 0.116 & 1.50 & 0.461 & 0.185 & 0.206 & 0.220 \\
GPT-4.1-nano & 10.2\% & 1024 & 0.302 & 0.250 & 0.240 & 0.110 & 1.59 & 0.422 & 0.114 & 0.275 & 0.094 \\

\midrule
\multicolumn{12}{l}{\textit{Base and prompting baselines}} \\
\hdashline
Qwen2.5-14B-Instruct & 22.7\% & 963 & \textbf{0.415} & 0.255 & 0.256 & 0.096 & 1.63 & 0.437 & 0.116 & 0.227 & 0.205 \\
Qwen2.5-7B-Instruct & 20.9\% & 965 & 0.230 & 0.184 & 0.177 & 0.058 & 1.57 & 0.289 & 0.058 & 0.244 & 0.195 \\
Qwen2.5-3B-Instruct & 8.4\% & 1088 & 0.310 & 0.222 & 0.224 & 0.106 & 1.63 & 0.369 & 0.142 & 0.276 & 0.072 \\
Qwen2.5-1.5B-Instruct & 8.0\% & 1257 & 0.317 & 0.202 & 0.212 & 0.113 & 2.12 & 0.457 & 0.068 & 0.294 & 0.076 \\
Qwen2.5-1.5B + CoT & 6.7\% & 1244 & 0.295 & 0.184 & 0.193 & 0.100 & 2.04 & 0.391 & 0.063 & 0.295 & 0.062 \\
Qwen2.5-1.5B + ReAct & 6.2\% & 1315 & 0.313 & 0.208 & 0.215 & 0.115 & 2.20 & \textbf{0.479} & 0.060 & 0.287 & 0.062 \\
Qwen2.5-1.5B + ReAct$^\dagger$ & 4.4\% & 1418 & 0.261 & 0.183 & 0.183 & 0.107 & 2.21 & 0.437 & 0.054 & 0.243 & 0.036 \\
Qwen2.5-0.5B-Instruct & 3.6\% & 1422 & 0.259 & 0.160 & 0.171 & 0.093 & 2.32 & 0.398 & 0.036 & 0.245 & 0.031 \\

\midrule
\multicolumn{12}{l}{\textit{RL-trained Qwen2.5-1.5B baselines}} \\
\hdashline
RLOO per-turn & 13.8\% & 1033 & 0.336 & 0.244 & 0.242 & 0.123 & 1.84 & 0.478 & 0.079 & 0.186 & 0.134 \\
PBRS learned + RLOO turn & 13.3\% & 1037 & 0.330 & 0.248 & 0.238 & 0.105 & 1.70 & 0.428 & 0.113 & \textbf{0.183} & 0.126 \\
PBRS learned & 12.4\% & 1026 & 0.323 & 0.229 & 0.225 & 0.097 & 1.83 & 0.427 & 0.088 & 0.186 & 0.109 \\
Episodic Return (GRPO) & 11.8\% & 2520 & 0.357 & 0.252 & 0.252 & \textbf{0.125} & 1.81 & 0.474 & 0.096 & 0.200 & 0.097 \\
Offline SFT+RL & 13.3\% & 1110 & 0.280 & 0.183 & 0.191 & 0.097 & 1.93 & 0.380 & 0.048 & 0.200 & 0.106 \\

\midrule
\rowcolor{echoshade}
\textbf{\textsc{ECHO} (Ours)} & \textbf{45.3\%} & 621 & 0.337 & 0.304 & 0.277 & 0.072 & 1.39 & 0.432 & \textbf{0.239} & 0.298 & \textbf{0.422} \\
\bottomrule
\end{tabular}
}
\caption{Recovery metrics conditioned on bad events in the Clue Selector Game. Metrics evaluate how models respond after poor intermediate choices, including immediate recovery, grounded recovery, recovery quality, speed of recovery, sustained improvement, redundancy avoidance, and final episode resolution. Arrows indicate preferred direction; best values are bolded.}
\label{tab:csg_recovery_bad}
\end{table*}
\begin{table*}[t]
\centering
\scriptsize
\setlength{\tabcolsep}{4.2pt}
\renewcommand{\arraystretch}{1.05}
\resizebox{\linewidth}{!}{
\begin{tabular}{lccccccccccc}
\toprule
\textbf{Model} & \textbf{Resolve} & \textbf{N} & \textbf{Rec1 $\uparrow$} & \textbf{GRec1 $\uparrow$} & \textbf{RecQ $\uparrow$} & \textbf{Gain $\uparrow$} & \textbf{TTR $\downarrow$} & \textbf{TTRsucc $\uparrow$} & \textbf{Sust2 $\uparrow$} & \textbf{NextRed $\downarrow$} & \textbf{EpRes $\uparrow$} \\
\midrule
\multicolumn{12}{l}{\textit{Frontier baselines}} \\
\hdashline
Claude Sonnet 4.6 & 43.1\% & 728 & 0.196 & 0.196 & 0.192 & 0.019 & 1.86 & 0.317 & 0.075 & 0.268 & 0.369 \\
o3-mini & 32.0\% & 524 & 0.208 & 0.202 & 0.193 & 0.024 & 1.32 & 0.263 & 0.150 & 0.267 & 0.297 \\
GPT-4.1-mini & 31.6\% & 856 & 0.303 & 0.256 & 0.243 & 0.062 & 1.40 & 0.367 & 0.141 & 0.225 & 0.267 \\
GPT-4o & 25.8\% & 868 & 0.270 & 0.260 & 0.236 & 0.055 & \textbf{1.31} & 0.344 & 0.220 & 0.240 & 0.244 \\
GPT-4o + ReAct & 25.3\% & 795 & 0.238 & 0.213 & 0.205 & 0.051 & 1.46 & 0.321 & 0.125 & 0.239 & 0.200 \\
GPT-4o-mini & 22.7\% & 937 & 0.395 & \textbf{0.313} & \textbf{0.301} & 0.116 & 1.50 & 0.463 & 0.186 & 0.204 & 0.220 \\
GPT-4.1-nano & 10.2\% & 928 & 0.321 & 0.267 & 0.256 & 0.117 & 1.61 & 0.457 & 0.115 & 0.234 & 0.094 \\

\midrule
\multicolumn{12}{l}{\textit{Base and prompting baselines}} \\
\hdashline
Qwen2.5-14B-Instruct & 22.7\% & 927 & \textbf{0.428} & 0.262 & 0.263 & 0.098 & 1.64 & 0.451 & 0.118 & 0.206 & 0.205 \\
Qwen2.5-7B-Instruct & 20.9\% & 927 & 0.239 & 0.192 & 0.184 & 0.060 & 1.57 & 0.300 & 0.059 & 0.227 & 0.195 \\
Qwen2.5-3B-Instruct & 8.4\% & 955 & 0.323 & 0.228 & 0.231 & 0.107 & 1.63 & 0.380 & 0.144 & 0.245 & 0.072 \\
Qwen2.5-1.5B-Instruct & 8.0\% & 997 & 0.333 & 0.213 & 0.223 & 0.117 & 2.12 & 0.484 & 0.067 & 0.267 & 0.076 \\
Qwen2.5-1.5B + CoT & 6.7\% & 1001 & 0.303 & 0.195 & 0.204 & 0.103 & 2.07 & 0.418 & 0.070 & 0.257 & 0.062 \\
Qwen2.5-1.5B + ReAct & 6.2\% & 1048 & 0.320 & 0.211 & 0.220 & 0.116 & 2.24 & \textbf{0.489} & 0.063 & 0.263 & 0.058 \\
Qwen2.5-1.5B + ReAct$^\dagger$ & 4.4\% & 1205 & 0.271 & 0.190 & 0.188 & 0.108 & 2.25 & 0.465 & 0.055 & 0.206 & 0.036 \\
Qwen2.5-0.5B-Instruct & 3.6\% & 1200 & 0.268 & 0.165 & 0.178 & 0.096 & 2.36 & 0.415 & 0.036 & 0.228 & 0.031 \\

\midrule
\multicolumn{12}{l}{\textit{RL-trained Qwen2.5-1.5B baselines}} \\
\hdashline
RLOO per-turn & 13.8\% & 1017 & 0.336 & 0.245 & 0.243 & 0.123 & 1.85 & 0.477 & 0.081 & \textbf{0.184} & 0.134 \\
PBRS learned + RLOO turn & 13.3\% & 1017 & 0.330 & 0.248 & 0.238 & 0.105 & 1.70 & 0.426 & 0.114 & \textbf{0.184} & 0.126 \\
PBRS learned & 12.4\% & 1013 & 0.323 & 0.229 & 0.225 & 0.096 & 1.83 & 0.427 & 0.087 & 0.185 & 0.109 \\
Episodic Return (GRPO) & 11.8\% & 2374 & 0.356 & 0.254 & 0.253 & \textbf{0.124} & 1.82 & 0.477 & 0.098 & 0.200 & 0.096 \\
Offline SFT+RL & 13.3\% & 1058 & 0.276 & 0.181 & 0.190 & 0.095 & 1.93 & 0.375 & 0.049 & 0.194 & 0.101 \\

\midrule
\rowcolor{echoshade}
\textbf{\textsc{ECHO} (Ours)} & \textbf{45.3\%} & 610 & 0.336 & 0.305 & 0.277 & 0.071 & 1.39 & 0.433 & \textbf{0.234} & 0.293 & \textbf{0.422} \\
\bottomrule
\end{tabular}
}
\caption{Recovery metrics conditioned on low-quality events in the Clue Selector Game. Metrics summarize immediate recovery, grounded recovery, recovery quality, recovery latency, sustained improvement, redundancy after failure, and final episode resolution. Arrows indicate preferred direction; best values are bolded.}
\label{tab:csg_recovery_low}
\end{table*}
\begin{table*}[t]
\centering
\scriptsize
\setlength{\tabcolsep}{4.2pt}
\renewcommand{\arraystretch}{1.05}
\resizebox{\linewidth}{!}{
\begin{tabular}{lccccccccccc}
\toprule
\textbf{Model} & \textbf{Resolve} & \textbf{N} & \textbf{Rec1 $\uparrow$} & \textbf{GRec1 $\uparrow$} & \textbf{RecQ $\uparrow$} & \textbf{Gain $\uparrow$} & \textbf{TTR $\downarrow$} & \textbf{TTRsucc $\uparrow$} & \textbf{Sust2 $\uparrow$} & \textbf{NextRed $\downarrow$} & \textbf{EpRes $\uparrow$} \\
\midrule
\multicolumn{12}{l}{\textit{Frontier baselines}} \\
\hdashline
Claude Sonnet 4.6 & 43.1\% & 665 & 0.191 & 0.191 & 0.187 & 0.009 & \textbf{1.29} & 0.250 & 0.064 & 0.314 & 0.369 \\
o3-mini & 32.0\% & 498 & 0.175 & 0.171 & 0.161 & 0.002 & 1.34 & 0.225 & 0.091 & 0.281 & 0.292 \\
GPT-4.1-mini & 31.6\% & 780 & 0.231 & 0.214 & 0.200 & 0.036 & 1.34 & 0.292 & 0.089 & 0.260 & 0.263 \\
GPT-4o & 25.8\% & 809 & 0.214 & 0.205 & 0.187 & 0.024 & 1.34 & 0.282 & 0.140 & 0.268 & 0.234 \\
GPT-4o + ReAct & 25.3\% & 747 & 0.213 & 0.189 & 0.181 & 0.033 & 1.41 & 0.277 & 0.092 & 0.256 & 0.200 \\
GPT-4o-mini & 22.7\% & 810 & 0.320 & 0.259 & 0.247 & 0.084 & 1.51 & 0.390 & 0.132 & 0.240 & 0.198 \\
GPT-4.1-nano & 10.2\% & 876 & 0.231 & 0.192 & 0.183 & 0.075 & 1.60 & 0.332 & 0.059 & 0.322 & 0.090 \\

\midrule
\multicolumn{12}{l}{\textit{Base and prompting baselines}} \\
\hdashline
Qwen2.5-14B-Instruct & 22.7\% & 862 & \textbf{0.349} & 0.233 & 0.231 & 0.075 & 1.52 & 0.381 & 0.105 & 0.254 & 0.191 \\
Qwen2.5-7B-Instruct & 20.9\% & 918 & 0.193 & 0.158 & 0.151 & 0.043 & 1.56 & 0.255 & 0.046 & 0.256 & 0.191 \\
Qwen2.5-3B-Instruct & 8.4\% & 919 & 0.232 & 0.155 & 0.154 & 0.064 & 1.67 & 0.287 & 0.065 & 0.319 & 0.068 \\
Qwen2.5-1.5B-Instruct & 8.0\% & 917 & 0.231 & 0.158 & 0.162 & 0.081 & 1.88 & 0.322 & 0.033 & 0.361 & 0.063 \\
Qwen2.5-1.5B + CoT & 6.7\% & 966 & 0.211 & 0.134 & 0.138 & 0.064 & 1.82 & 0.275 & 0.035 & 0.351 & 0.062 \\
Qwen2.5-1.5B + ReAct & 6.2\% & 949 & 0.229 & 0.158 & 0.161 & 0.079 & 1.97 & 0.334 & 0.042 & 0.349 & 0.062 \\
Qwen2.5-1.5B + ReAct$^\dagger$ & 4.4\% & 960 & 0.164 & 0.123 & 0.121 & 0.071 & 1.93 & 0.253 & 0.021 & 0.329 & 0.032 \\
Qwen2.5-0.5B-Instruct & 3.6\% & 960 & 0.167 & 0.095 & 0.104 & 0.054 & 1.93 & 0.210 & 0.019 & 0.330 & 0.027 \\

\midrule
\multicolumn{12}{l}{\textit{RL-trained Qwen2.5-1.5B baselines}} \\
\hdashline
RLOO per-turn & 13.8\% & 857 & 0.247 & 0.194 & 0.184 & \textbf{0.085} & 1.74 & 0.376 & 0.045 & 0.223 & 0.122 \\
PBRS learned + RLOO turn & 13.3\% & 893 & 0.262 & 0.203 & 0.193 & 0.073 & 1.64 & 0.349 & 0.083 & \textbf{0.213} & 0.110 \\
PBRS learned & 12.4\% & 889 & 0.256 & 0.196 & 0.185 & 0.072 & 1.74 & 0.354 & 0.059 & 0.215 & 0.096 \\
Episodic Return (GRPO) & 11.8\% & 2009 & 0.268 & 0.191 & 0.190 & 0.083 & 1.76 & 0.358 & 0.054 & 0.243 & 0.090 \\
Offline SFT+RL & 13.3\% & 978 & 0.217 & 0.154 & 0.155 & 0.068 & 1.90 & 0.322 & 0.030 & 0.224 & 0.089 \\

\midrule
\rowcolor{echoshade}
\textbf{\textsc{ECHO} (Ours)} & \textbf{45.3\%} & 593 & 0.312 & \textbf{0.285} & \textbf{0.261} & 0.060 & 1.39 & \textbf{0.405} & \textbf{0.192} & 0.312 & \textbf{0.421} \\
\bottomrule
\end{tabular}
}
\caption{Recovery metrics conditioned on late bad events in the Clue Selector Game. Metrics evaluate whether models can recover from harmful or low-quality decisions occurring later in an episode, where fewer turns remain for correction. Arrows indicate preferred direction; best values are bolded.}
\label{tab:csg_recovery_late_bad}
\end{table*}
\subsection{ RL Results}

\paragraph{CSG Reward Composition} The reward  $r_t$ is computed as follows. See \Cref{sec:experiments} for motivation on why this reward design is appropriate for our theoretical and experimental setup.  Specifically, if the policy has already asked the same question to the selected arm earlier in the episode, the turn returns immediately with the redundancy penalty $r_t = -0.1$ and the oracle is not consulted. Otherwise the oracle returns a hint, the candidate set is filtered, and $r_t$ is the sum of up to five components: the fractional information gain $r_{\text{info}} = (\text{candidates\_before} - \text{candidates\_after}) / \text{candidates\_before}$, which lies in $[0,1]$ and dominates the dense per-turn signal; an arm-match bonus of $+0.05$ when the question matches a property the selected arm covers (keyword match against \texttt{ARM\_PROPERTIES}); a diversity bonus of $+0.05$ when the question has not been asked to any arm earlier in the episode; a sparse resolution bonus of $+1.0$ when the post-oracle candidate set equals $\{\text{secret}\}$; and an efficiency bonus of $0.1 \cdot (T_{\max} - t)$ on resolution, with $T_{\max} = 10$ turns.

\paragraph{Training Hyperparameters} The KL coefficient is set to $\beta_{\text{KL}} = 0$ to remove the reference-model anchor and permit policy drift during the continuation phase; gradients are clipped at $\|\nabla\| \le 0.5$; the PPO clip range uses the TRL default. The information-gain term dominates over most of training (range $[0,1]$ per turn vs.\ $\le 0.1$ for the shaping bonuses), making per-turn elimination the primary gradient signal. The resolution bonus is sparse but large at $+1.0$ once per resolved episode, supplying the long-horizon reward needed to escape the local optimum of asking individually informative questions that fail to fully resolve. The redundancy penalty acts as a hard early-return: in the pre-fix regime the policy frequently selected duplicate (arm, question) pairs, triggering the penalty on a large fraction of turns and short-circuiting all other reward components, so the gradient saw $-0.1$ rather than the information-gain landscape underneath. Setting $\beta_{\text{KL}} = 0$ removes the reference-model anchor, trading trajectory-drift risk for faster specialization during the continuation phase. \\

\begin{table*}[t]
\centering
\scriptsize
\setlength{\tabcolsep}{5pt}
\renewcommand{\arraystretch}{1.15}
\begin{tabularx}{\linewidth}{p{3.0cm}p{4.6cm}X}
\toprule
\textbf{Metric} & \textbf{Definition} & \textbf{Diagnostic Role in Clue Selector Training} \\
\midrule
\multicolumn{3}{l}{\textit{Outcome metrics --- did the policy solve the task?}} \\

\midrule
Episode Resolve Rate & Fraction of episodes (capped at 10 turns) where the policy identifies the secret number. & Primary outcome. The only metric that directly answers whether the policy solves the task; bounded above by 1, so the trajectory bends as performance approaches the ceiling. \\
Turn Resolve Rate & Fraction of individual turns that reduce the candidate set to a single element. & Per-turn analogue of episode resolve. Distinguishes ``the policy resolved this episode'' from ``the policy resolved on this specific turn,'' surfacing whether resolution comes from a single decisive action or from gradual narrowing. \\
Reward & Mean scalar reward across all turns in the batch; primary GRPO signal. & Coarse summary of all reward components. Useful as a sanity check but conflates information gain, resolution bonus, and efficiency bonus; should always be read alongside the component metrics. \\
Normalized Reward & Reward divided by the maximum achievable per episode. & Scale-free measure of proximity to optimal play. Lets cross-configuration comparisons abstract away changes in reward weighting. \\
\midrule
\multicolumn{3}{l}{\textit{Mechanism metrics --- how is the policy solving (or failing to solve) the task?}} \\
\midrule
Eliminated Mean & Average number of candidates removed per turn after applying the oracle's response. & Direct measure of information gain per question. With candidate sets near 25--30, an information-maximizing question eliminates roughly half; values approaching this threshold indicate the policy has discovered near-optimal binary partitioning at the question level. \\
Candidates Before / After & Average candidate set size at the start of a turn / after applying the oracle's response. & Pre- and post-action uncertainty. The gap is per-turn information gain; the ratio is the fractional reduction. A falling \emph{Candidates Before} across the episode indicates earlier turns are doing useful work that compresses later turns' starting state. \\
Arm Match Rate & Fraction of turns where the selected arm matches the oracle's preferred arm. & Strategic alignment metric, distinct from question quality. With $K$ arms, the random baseline is $1/K$. Values near baseline while \emph{Episode Resolve Rate} climbs indicate the policy is winning through within-arm question quality rather than informed arm selection --- a decomposition that flags untapped headroom in higher-order strategy. \\
Mean Turn & Average turn index at which resolution occurs, conditioned on successful episodes. & Efficiency on solved episodes. Low values indicate the policy resolves early in the budget; the gap between this and \emph{Mean Episode Length} reflects the cost of unresolved episodes running to the cap. \\
Mean Episode Length & Average total turns per episode, including unresolved runs. & Aggregate efficiency. Drops toward \emph{Mean Turn} as \emph{Episode Resolve Rate} climbs, because fewer episodes are exhausting the turn cap without resolving. \\
Turns Saved & Average turns saved relative to a baseline episode length on resolved episodes. & Efficiency expressed in turn units. Couples resolution with promptness; nonzero values indicate the policy is not just resolving but resolving \emph{faster} than the baseline expects. \\
Efficiency Bonus & Additional reward for resolving episodes earlier than baseline. & Reward-side counterpart to \emph{Turns Saved}. Grows non-linearly with both arm match rate and resolve rate, since both must improve for early resolution to become routine. \\
\midrule
\multicolumn{3}{l}{\textit{Environment-correctness metrics --- is the policy seeing the reward signal we intended?}} \\
\midrule
Redundancy Rate & Fraction of turns where the oracle's response eliminates zero candidates. & Diagnostic of the filter rather than the policy. High values indicate that questions intended to be informative are being scored as wasted, gating the reward signal upstream of any learning. Near-zero values are a precondition for the mechanism metrics above to behave as designed. \\
Diversity Rate & Fraction of turns where the question was not asked previously in the same episode. & Guards against trivial repetition. Once near 1, it functions as a passive monitor: a drop is the first sign of mode collapse onto a small set of high-reward questions. \\
\midrule
\bottomrule
\end{tabularx}
\caption{Metrics used to analyze policy performance during training, training dynamics over turns, and environment correctness in the Clue selector task. Metrics are grouped by diagnostic role: outcome metrics report task success; mechanism metrics decompose \emph{how} success is achieved (information gain per question vs.\ strategic arm selection vs.\ efficiency); environment-correctness metrics confirm the reward signal is reaching the policy as intended}

\label{tab:metric_definitions}
\end{table*}

\section{CSG Episode Dataset and Statistics}

\begin{figure*}[t] \centering \scriptsize \begin{forest} for tree={ grow'=0, draw, rounded corners, align=left, parent anchor=east, child anchor=west, edge={->, thick}, l sep=8mm, s sep=2.2mm, inner xsep=4pt, inner ysep=3pt, font=\scriptsize, } [{\textbf{CSG Episode Data}\\ \emph{contains trajectories and metrics}}, fill=echoshade [{\textbf{model\_summary}\\ one row per model\\ mean \(\pm\) SEM over runs\\ \(\rightarrow\) overall comparison}, fill=lightblue [{\texttt{model\_key}, \texttt{group}, \texttt{n\_runs}\\ \texttt{resolve\_rate}, \texttt{zero\_rate}\\ \texttt{quality}, \texttt{grounded}\\ \texttt{grounded\_recover}\\ \texttt{res\_after\_zero}}, fill=lightgraybox] ] [{\textbf{runs}\\ one row per model \(\times\) generic run\\ \texttt{run1}, \texttt{run2}, \texttt{run3}\\ \(\rightarrow\) run-level stability}, fill=lightblue [{\texttt{resolve\_rate}, \texttt{zero\_rate}\\ \texttt{quality\_mean}, \texttt{grounded\_rate}\\ \texttt{late\_calib}, \texttt{reason\_pct}\\ \texttt{grounded\_recover\_rate}}, fill=lightgraybox] ] [{\textbf{episodes}\\ one row per model \(\times\) run \(\times\) episode\\ \(\rightarrow\) trajectory-level success/failure}, fill=lightgreen [{\texttt{episode\_id}, \texttt{secret}\\ \texttt{resolved}, \texttt{turns\_taken}\\ \texttt{final\_candidates}\\ \texttt{zero\_rate}, \texttt{grounded\_rate}\\ \texttt{resolved\_after\_zero}}, fill=lightgraybox] ] [{\textbf{turns}\\ one row per model \(\times\) run \(\times\) episode \(\times\) turn\\ \(\rightarrow\) most granular belief update}, fill=lightgreen [{\textbf{action}\\ \texttt{arm}, \texttt{question}, \texttt{question\_type}\\ \texttt{raw\_response}, \texttt{completion}}, fill=lightgraybox] [{\textbf{observation}\\ \texttt{hint}, \texttt{answer}}, fill=lightgraybox] [{\textbf{belief transition}\\ \texttt{candidates\_before}\\ \texttt{eliminated}\\ \texttt{candidates\_after}}, fill=lightgraybox] [{\textbf{turn metrics}\\ \texttt{zero\_nonred}, \texttt{elim\_quality}\\ \texttt{grounded}, \texttt{late\_narrow}\\ \texttt{reasoning\_presence}}, fill=lightgraybox] [{\textbf{recovery diagnostics}\\ \texttt{prev\_zero\_nonred}\\ \texttt{grounded\_recover}\\ \texttt{recovery\_quality}\\ \texttt{next\_bad\_after\_this\_zero}}, fill=lightgraybox] ] ] \end{forest} \caption{  CSG dataset structure. The dataset is organized from model-level summaries to run-level aggregates, episode-level trajectories, and turn-level belief-state transitions. File paths and raw run filenames are removed; run identities are exposed as generic identifiers. } \label{fig:csg_dataset_structure} \end{figure*}

\end{document}